\newcommand\numberthis{\addtocounter{equation}{1}\tag{\theequation}}
\tikzset{
    cross/.pic = {
    \draw[rotate = 45] (-#1,0) -- (#1,0);
    \draw[rotate = 45] (0,-#1) -- (0, #1);
    }
}
\newtheorem{theorem}{Theorem}[section]
\newtheorem{lemma}[theorem]{Lemma}
\newtheorem{definition}[theorem]{Definition}
\newtheorem{remark}[theorem]{Remark}
\newenvironment{proof}{{\bf Proof:}}{\hfill\rule{2mm}{2mm}}
\newcommand{\TR}{\textsc{Trapezoid}}
\newcommand{\BA}{\textsc{Ball}}
\newcommand{\LL}{\mathcal{L}}
\newcommand{\EL}{\ell}
\newcommand{\HIT}{\textsc{HitSet}}
\newcommand{\EP}{\epsilon}
\newcommand{\EO}{e_1^*}
\newcommand{\ET}{e_2^*}
\newcommand{\FT}{\textsc{Ft}}
\newcommand{\QFT}{\textsc{Ft-Query}}
\newcommand{\TL}{\tilde{O}}
\newcommand{\DIA}{\diamond}
\newcommand{\DD}{\mathbb{D}}
\newcommand{\PP}{\mathbb{P}}
\newcommand{\QU}{\textsc{Query}}
\newcommand{\AN}{\textsc{Ans}}
\newcommand{\DO}{\textsc{Do}}
\newcommand{\SDO}{\textsc{Sdo}}
\newcommand{\LCA}{\textsc{Lca}}
\newcommand{\NW}{(nW)}
\newtheorem{assumption}[theorem]{Assumption}
\title{Near Optimal Dual Fault Tolerant Distance Oracle}
\author{
Dipan Dey \\
  IIT Gandhinagar \\
  Gandhinagar \\
  India\\
  \texttt{dey\_dipan@iitgn.ac.in} \\
  \and
  Manoj Gupta \\
  IIT Gandhinagar \\
  Gandhinagar \\
  India\\
  \texttt{gmanoj@iitgn.ac.in} \\
}
\date{}
\begin{document}

\maketitle



\maketitle
\begin{abstract}
	We present a dual fault-tolerant distance oracle for undirected and unweighted graphs. Given a set $F$ of two edges, as well as a source node $s$ and a destination node $t$, our oracle returns the length of the shortest path from $s$ to $t$ that avoids $F$ in $O(1)$ time with a high probability. The space complexity of our oracle is $\Tilde{O}(n^2)$ \footnote{$\Tilde{O}$ hides poly$\log n$ factor }, making it nearly optimal in terms of both space and query time.

Prior to our work, Pettie and Duan [SODA 2009] designed a dual fault-tolerant distance oracle that required $\Tilde{O}(n^2)$ space and $O(\log n)$ query time. In addition to improving the query time, our oracle is much simpler than the previous approach.

\end{abstract}

\newpage

\section{Introduction}
Graph theory is a valuable tool for modelling and solving many real-world problems. As real-world networks are susceptible to failures, such as faulty edges or vertices within the network, we must design a system that continues to operate in the presence of failures. In this paper, we focus on developing algorithms that effectively operate in the presence of faults. Consider the scenario where we aim to determine the shortest path from a source vertex to a destination vertex. However, we become aware that one of the edges in the graph is unavailable or faulty due to some reasons. Under these circumstances, our objective becomes finding an alternative shortest path that avoids the faulty edge.

Let us describe an abstract model in which we want to solve the above problem. We consider an undirected and unweighted graph, denoted as $G$. To facilitate efficient query processing, we preprocess this graph to construct a suitable data structure. This data structure is used to answer queries about shortest paths in $G$ in the presence of faults.
We assume that we will receive queries of the following form:
\vspace{-0.15 cm}
\begin{center}
$\QU (s,t,F)$: Find length of the shortest path between vertices $s$ and $t$ in graph $G$, avoiding set of edges $F$.
\end{center}
\vspace{-0.2 cm}
The objective of our algorithm is to efficiently answer the above query by utilising the prepared data structures. After creating a data structure, while queried about $\QU(s,t,F)$, we run an algorithm (named query algorithm) to answer the query. This combination of data structure and query algorithm is commonly referred to as an {\em oracle}. Our paper focuses on calculating distances in the presence of faults. Consequently, we refer to our oracle as a {\em fault-tolerant distance oracle}.

We evaluate the fault-tolerant distance oracle based on two key parameters: (1) the space required by the data structure and (2) the query time, which refers to the time taken to respond to a query.
To facilitate the description of our results in this context, we introduce a simple notation. We define a distance oracle that avoids a fault set of size at most $f$ as $\DO(f)$. When the source vertex $s$ is fixed in the  $\QU$, the oracle is referred to as a Single Source Distance Oracle or $\SDO(f)$ for brevity.

In this paper, our focus is designing  $\DO(2)$ -- i.e., designing fault-tolerant distance oracle that can handle fault set of size at most 2.
We discuss the previous results obtained in this area in \Cref{table}.

\begin{table}[hpt!]
\begin{tabular}{|c|c|c|p{4 cm}|c|}
\hline
Oracle & Space &Query time & Remarks & Ref\\
\hline
$\DO(1)$ & $\TL(n^2)$ & $O(1)$ &-&\cite{Demetrescu2008}\\
\hline
$\DO(2)$ & $\TL(n^2)$ & $O(\log n)$ &-&\cite{Duan2009}\\
\hline
$\DO(f)$ & $\TL(n^{3- \alpha})$ &  $\TL(n^{2-(1- \alpha)/f })$ & $\alpha \in [0,1]$ when the preprocessing time  is $O(Mn^{3.376 - \alpha})$ and  edge weights are integral in the range $[-M, \dots , M].$ & \cite{10.1145/2438645.2438646}    \\
\hline

$\DO(f)$ & $\TL(n^{2+\alpha})$ & $O(n^{2-\alpha}f^2 + n f^{\omega})$ & $\alpha \in [0,1]$ and $\omega $ is the matrix multiplication exponent \cite{coppersmith1987matrix,stothers2010complexity,Williams12,Gall14a, alman2021refined}& \cite{BrandS19}\\

\hline
$\DO(f)$ & $O(fn^4)$ & $O(f^{O(f)})$ &-&\cite{DuanR22}\\
\hline
$f$ & $O(n^2)$ & $\TL(nf^{\omega})$ & Edge weights are in the range $[1\dots  W]$ & \cite{KarczmarzS23} \\
\hline
$f$ & $O(f^4 n^2 \log \NW)$ & $O(c^{(f+1)^2}f^{8(f+1)^2}\log^{2(f+1)^2} \NW)$ & Edge weights are in the range $[1\dots  W]$ & \cite{DeyGupta24} \\
\hline
\end{tabular}
\caption{Relevant results for fault-tolerant distance oracles}
\label{table}
\end{table}

For $\DO(2)$, the oracle of Pettie and Duan \cite{Duan2009} is nearly optimal. It takes $\TL(n^2)$ space and $O(\log n)$ query time. However, as the authors of  \cite{Duan2009} also mentioned, their approach is overly complicated and requires extensive case analysis. Duan and Ren \cite{DuanR22} proposed a fundamentally different approach. They designed an algorithm for $\DO(f)$ that takes $O(n^4)$ space and $O(f^{O(f)})$ query time. For $\DO(2)$, their query time is $O(1)$. Unlike \cite{Duan2009}, the main feature of their algorithm is that it involves a limited amount of case analysis. However, the space taken by their algorithm is $O(n^4)$. Recently, Dey and Gupta \cite{DeyGupta24} designed an $f$-fault tolerant distance oracle with $\TL(n^2)$ space and $\TL(1)$ query time for fixed $f$. In this paper, using an approach similar to \cite{DeyGupta24} and \cite{DuanR22}, we design a dual-fault tolerant oracle with $\TL(n^2)$ size and $O(1)$ query time. We use some randomisation techniques which helps us to reduce the query time. The main result of this paper is: 

\begin{theorem}
\label{thm:main}
        For an undirected and unweighted graph, there is a dual fault-tolerant oracle that takes $\TL(n^2)$ space and answers each query in $O(1)$ time with high probability.
\end{theorem}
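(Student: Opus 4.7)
The plan is to follow the structural decomposition philosophy of Duan--Ren \cite{DuanR22} but to replace their $O(n^4)$ space requirement by $\tilde{O}(n^2)$ using a random pivot-sampling paradigm in the style of Dey--Gupta \cite{DeyGupta24}. First I would preprocess the Demetrescu--Thorup single-fault oracle so that $d_G(u,v,e)$ can be returned in $O(1)$ time from $\tilde{O}(n^2)$ space, and fix canonical shortest paths $P(u,v)$ and canonical single-fault replacement paths $P(u,v,e)$ for every triple. For a query $(s,t,\{e_1,e_2\})$, if $P(s,t)$ already avoids $F$, or if $P(s,t,e_1)$ avoids $e_2$ (symmetrically for $e_2$), the answer is read off by a constant number of single-fault lookups. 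The nontrivial case is when both faults obstruct the canonical shortest path and one of the canonical single-fault replacement paths.

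In the nontrivial case the plan rests on a structural lemma of the following flavour: the optimal replacement path $\pi^\ast$ avoiding $\{e_1,e_2\}$ decomposes into a constant number of canonical shortest-path sub-segments, joined at a constant number of intermediate vertices. I would then sample a set $B$ of $\tilde{O}(n)$ vertices uniformly at random, so that any sub-segment of length at least $\tilde{\Omega}(n/|B|)$ is hit by a pivot of $B$ with high probability. For each $p\in B$ I would build an $\SDO(2)$ rooted at $p$ with $\tilde{O}(n)$ space and $O(1)$ query time, collectively occupying $\tilde{O}(n^2)$ space. Long sub-segments of $\pi^\ast$ can then be discovered by iterating over the constant number of pivots prescribed by the structural lemma and combining a query to the relevant pivot's $\SDO(2)$ with one to the single-fault oracle on the other side. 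Short sub-segments, in which the faults lie close together on the canonical path, would be handled by auxiliary $\tilde{O}(n^2)$-space tables indexed by the positions of $e_1,e_2$ along $P(s,t)$ -- the \TR{} and \BA{} structures suggested by the macros of the paper -- so that the contribution of each candidate short segment can be scored in $O(1)$ time.

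The main obstacle is the jump from the $O(\log n)$ query time of Pettie--Duan to $O(1)$: this requires the structural lemma to be sharp enough that only a constant number of candidate decompositions have to be inspected, and each must be evaluated in $O(1)$ time by direct indexing into the precomputed tables and pivot oracles, leaving no room for a binary search over positions on $P$. A secondary difficulty is calibrating the sampling rate and the size of the short-segment tables so that the two regimes meet without overlap and sum to $\tilde{O}(n^2)$ space. Once such a lemma is in place, the query algorithm simply enumerates the constant-size family of decompositions, returns the minimum total length, and correctness follows from the decomposition lemma together with the high-probability hitting-set guarantee that every long sub-segment of $\pi^\ast$ contains a sampled pivot, which yields \Cref{thm:main}.
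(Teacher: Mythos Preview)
Your proposal has a genuine circularity at its core. You write that for each sampled pivot $p\in B$ you would ``build an $\SDO(2)$ rooted at $p$ with $\tilde{O}(n)$ space and $O(1)$ query time.'' No such single-source two-fault oracle is known; indeed, if it existed you could instantiate it for every vertex and obtain $\DO(2)$ in $\tilde{O}(n^2)$ space and $O(1)$ query time immediately, which is precisely the theorem you are trying to prove. Even the best $\SDO(1)$ needs $\tilde{O}(n^{3/2})$ space. So the heavy lifting you defer to the pivot oracles is the entire content of the theorem.

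A second gap is how you locate the relevant pivot in $O(1)$ time. The 2-decomposability lemma tells you $\pi^\ast$ is a concatenation of at most three shortest paths, but it does not tell you \emph{which} of your $\tilde{O}(n)$ sampled pivots lands on the middle segment. Your sentence ``iterating over the constant number of pivots prescribed by the structural lemma'' hides this: the lemma prescribes a constant number of \emph{segments}, not pivots, and without knowing $\pi^\ast$ you cannot index into $B$ in $O(1)$ time. This is exactly the obstacle the paper has to work around.

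The paper's mechanism is quite different from your hitting-set-plus-pivot-oracle sketch. It keeps the Duan--Ren \emph{maximiser} machinery --- precomputed tables that, for each parameter tuple, store the fault pair $(e_1^*,e_2^*)$ maximising the avoided distance --- but reparametrises the tables so they fit in $\tilde{O}(n^2)$. The key idea is a new intermediate notion, a ``$D$-close'' vertex: a landmark $y\in\mathcal{L}_\ell$ on $P$ whose distance to the relevant fault is $O(2^\ell)$. Because $|\mathcal{L}_\ell|=\tilde{O}(n/2^\ell)$ while the distance parameter ranges over $O(2^\ell)$ values, the product is $\tilde{O}(n)$ per side, giving $\tilde{O}(n^2)$ total. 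The query algorithm then makes a bounded chain of maximiser calls ($D$-close from $s$, $D$-close from $t$, $s$-clean, $t$-clean, intermediate), each returning $O(1)$ candidate vertices; the hardest sub-case (detour starting on the secondary path) needs an extra trapezoid argument and three maximiser calls to reach a $D$-close vertex. The space for the ``clean'' maximisers is bounded not by counting arbitrary clean vertices but by observing that any clean vertex used must have been output by an earlier maximiser, so there are only $O(1)$ candidates per previous table entry. None of this is present in your outline, and the circular $\SDO(2)$ assumption would need to be replaced by something like it.
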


Unfortunately, some tools we use in this paper do not works for weighted graphs.
\vspace{-0.4 cm}
\section{Overview}
\label{sec:overview}

In this section, we will at first give an overview of the method used by Duan and Ren in \cite{DuanR22} to create an $f$-fault tolerant distance oracle with $O(n^4)$ size. After that, we will give an overview of our modifications to that approach to have an oracle for dual faults with $\TL(n^2)$ size and constant query time. For that, let us define some notations at first. The notation $st$ represents the shortest path from vertex $s$ to vertex $t$. The notation $|st|$ denotes the length of this path (same as the number of edges in the $st$ path for unweighted graphs). The notation $st \DIA F$ denotes the shortest path from $s$ to $t$ while avoiding a set of edges $F=\{e_1,e_2\}$, and $|st \DIA F|$ corresponds to the length of that path.

\subsection{Overview of the approach in \cite{DuanR22}}
Suppose we want to find the path $P= st \DIA F$. In \cite{DuanR22}, the authors introduced the concept of `maximiser', which will be discussed further in later sections. For now, we want to emphasize on the fact that the `maximisers' output either the length of the path $P$ or a set of vertices serving as a `hitting set' of $P$. We will formally prove this later. Together with these  `maximisers', they also used a structural theorem of shortest paths under edge failures to use the hitting set effectively. A path is said to be $f$-decomposable if it is the concatenation of at most $(f + 1)$ shortest paths in $G$.
In \cite{AfekBKCM02}, the authors show the following:

\begin{theorem}(Theorem 1 in \cite{AfekBKCM02}) For  any set of failures $F$ of size $f$, every shortest path from $s$ to $t$ avoiding $F$ is $f$-decomposable.
\end{theorem}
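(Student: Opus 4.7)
The plan is to produce an explicit decomposition of $P = st \DIA F$ by a greedy sweep along $P$, and then bound the number of pieces by charging each internal boundary to a distinct element of $F$. Write $P = v_0 v_1 \cdots v_k$ with $v_0 = s$ and $v_k = t$. Set $j_0 = 0$, and having defined $j_i$, let $j_{i+1}$ be the largest index for which the subpath $v_{j_i} v_{j_i+1} \cdots v_{j_{i+1}}$ of $P$ is a shortest path in $G$ between its endpoints. The process terminates at some $j_m = k$ and partitions $P$ into $m$ shortest $G$-paths, so it suffices to show $m \le f + 1$.

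For each internal boundary index $r \in \{1, \ldots, m-1\}$, the maximality of $j_r$ forces the extended subpath $v_{j_{r-1}} v_{j_{r-1}+1} \cdots v_{j_r + 1}$ of $P$ to be strictly longer than the $G$-distance between its endpoints. Let $Q_r$ be a shortest $G$-path between $v_{j_{r-1}}$ and $v_{j_r + 1}$. Then $|Q_r|$ is strictly less than the length of that extended subpath of $P$, and $Q_r$ must contain at least one edge of $F$: otherwise, substituting $Q_r$ into $P$ and extracting a simple path would yield a strictly shorter $s$-$t$ path in $G \setminus F$, contradicting the optimality of $P$. Pick a witness $e_r \in Q_r \cap F$.

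The heart of the argument, and the step I expect to be hardest, is to show that the witnesses $e_1, \ldots, e_{m-1}$ can be chosen pairwise distinct, which would immediately give $m - 1 \le f$. I would attempt this by an exchange argument: assuming $e_p = e_q = e$ for some $1 \le p < q \le m-1$, split $Q_p$ and $Q_q$ at $e$ into shortest $G$-subpaths ($Q_p = A_p \cdot e \cdot B_p$ and $Q_q = A_q \cdot e \cdot B_q$, after matching the orientation of $e$), then splice together the prefix of $P$ up to $v_{j_{p-1}}$, the piece $A_p$ to the common endpoint of $e$, the reverse of $A_q$ back to $v_{j_{q-1}}$, and the suffix of $P$ from $v_{j_{q-1}}$ onward. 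A length calculation using that every greedy segment realises the $G$-distance between its endpoints and that every $Q_\ell$ strictly beats the extended subpath of $P$ should show the spliced $s$-$t$ walk is strictly shorter than $P$ while avoiding $e$. The delicate part is ensuring the spliced walk simultaneously avoids every \emph{other} edge of $F$; I would handle this by choosing each $Q_r$ to be a shortest $G$-path with the minimum number of edges in $F$ and, if necessary, invoking the optimality of $P$ on the constructed subwalks iteratively. Once the distinct-witness claim is in hand, $m \le f + 1$, proving the theorem.
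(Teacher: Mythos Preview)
The paper does not prove this theorem; it is quoted from \cite{AfekBKCM02} and used as a black box, so there is no in-paper argument to compare against. Your proposal therefore has to stand on its own.

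The greedy decomposition is the right object, and your argument that each $Q_r$ must meet $F$ is correct. The genuine gap is in the exchange argument for distinctness of the witnesses: the length comparison does not go through. Let $w$ be the endpoint of $e$ at which $A_p$ and $A_q$ terminate. Since $A_p$ and $A_q$ are prefixes of the shortest $G$-paths $Q_p$ and $Q_q$, they are themselves shortest, so $|A_p| + |A_q| = |v_{j_{p-1}}w| + |v_{j_{q-1}}w| \ge |v_{j_{p-1}}v_{j_{q-1}}|$ by the triangle inequality in $G$. In the adjacent case $q = p+1$, the portion of $P$ being replaced, namely $P[v_{j_{p-1}}, v_{j_p}]$, is a single greedy segment and hence has length exactly $|v_{j_{p-1}}v_{j_p}|$; thus $|A_p| + |A_q| \ge |P[v_{j_{p-1}}, v_{j_p}]|$ and the spliced walk is at least as long as $P$, not strictly shorter. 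The very property you invoke---that greedy segments already realise $G$-distances---is precisely what prevents any detour through $w$ from undercutting them. So no contradiction arises, and this is independent of the (also real) difficulty that $A_p$ and $A_q$ may carry other edges of $F$; your proposed fixes for that are too vague to assess, but the issue is moot once the length inequality fails.

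The statement is correct, but the standard proofs proceed by induction on $f$ rather than by globally matching boundaries to faults: one peels off the maximal shortest-in-$G$ prefix of $P$ and argues that the remaining suffix is a shortest path avoiding a strict subset of $F$. Consult the original reference for that step rather than pursuing the exchange route.
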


Let us now discuss an important aspect of this theorem. Let $P = st \DIA \{e_1,e_2\}$ be 2-decomposable and $x$ be a vertex of the hitting set which lies on the path $P$. Since $P$ is 2-decomposable, it contains at most three shortest paths, say $A$, $B$, and $C$, where $P$ starts with the path $A$ and ends with the path $C$. If $x$ lies on $B$, then $P[s,x]$ and $P[x,t]$ are 1-decomposable. Thus, we have hit path $P$ and reduced the problem into two smaller subproblems (in terms of decomposability). However, if $x$ lies on $A$ or $C$, then we have not reduced the problem. The whole game now is to design an algorithm that can find an $x \in B$. In \cite{DuanR22}, Ren and Duan design such an algorithm. Since our strategy is similar, let us define some notations that formalise our discussion.

\begin{definition} (First and the last segment in a path and an intermediate vertex)

Let us assume that $F$ is a set of size $f$ and $P=st \DIA F$ is $f$-decomposable. Thus, $P$ can be decomposed into at most $f+1$ shortest paths. Let $A$ and $B$ be the first and the last shortest path from these $ f+1$ shortest paths. Thus, $A$ is the shortest path that starts from $s$, and $B$ is the shortest path that ends at $t$. $A$ is called  the first segment of the path $P$ and $B$ is called the last segment of the path $P$.  If a vertex $x \in P$ does not lie in the first and last segments, then it is called an intermediate vertex.
\end{definition}

Before we describe further, let us at first define some notations. 

\begin{definition}
A path $sx$ is said to be {\bf intact from failures} $\{e_1,e_2\}$ if the endpoints of $e_1$ and $e_2$ do not lie strictly between\footnote{\cite{DuanR22} did not require the endpoints to be strictly inside. However, we require it to avoid some corner cases.} $s$ and $x$ on the path $sx$. Similarly, if the subtree under a node $x$ in $T_s$, that is $T_s(x)$, do not contain any endpoints of $e_1$ and $e_2$, it is intact from failures $\{e_1,e_2\}$.
\end{definition}

The above definition may appear natural. However, it is not. Usually, when we say that a path or tree is intact from failures $\{e_1,e_2\}$, we mean that it does not contain $e_1$ and $e_2$. In contrast, the above definition does not allow the endpoints of $e_1$ and $e_2$. Though never emphasised, this fact is used crucially in the algorithm of Ren and Duan \cite{DuanR22}.
We now define the notion of a {\em clean} vertex introduced in \cite{DuanR22}.
\begin{definition}
A vertex $x$ is called $s$-clean from $\{e_1,e_2\}$ if $sx$ and $T_s(x)$ are intact from failures $\{e_1,e_2\}$.
\end{definition}

\begin{figure}[hpt!]
\centering
\begin{tikzpicture}[node distance=3.5cm, box/.style={draw, rectangle, minimum width=2cm, minimum height=1cm}]

\draw (0,0)
    node[align=center,minimum size=3cm,draw,box] (box1)
    {Found a \\$s$-clean vertex};
\draw (0,0)
    node[right of=box1,align=center,minimum size=3cm,draw,box] (box2)
    {Found a \\$t$-clean vertex};

\draw (0,0)
    node[right of=box2,align=center,minimum size=3cm,draw,box] (box3)
    {Found an \\intermediate vertex};
\draw[->] (box1) -- (box2);
\draw[->] (box2) -- (box3);

\node[align=center,minimum size=0.5cm] at (0,-1)
    {After the first call\\ to the maximizer};
\node[align=center,minimum size=0.5cm] at (3.6,-1)
    {After the second call\\ to the maximizer};
\node[align=center,minimum size=0.5cm] at (7,-1)
    {After the third call\\ to the maximizer};
\end{tikzpicture}
\caption{After at most three calls to the maximizer, we are sure to find an intermediate vertex.}
\label{fig:approachDuanR}
\end{figure}
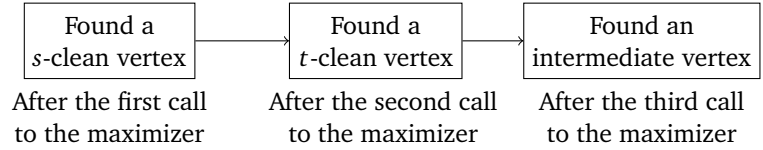
Duan and Ren used the above `clean' vertices crucially in \cite{DuanR22}. Let \(P = st\DIA F\) be \(f\)-decomposable. As mentioned above, we want to find an intermediate vertex \(x\) on \(P\), after which we can recurse on \(P[s,x]\) and \(P[x,t]\). To this end, \cite{DuanR22} designed an algorithm $\QU$, which is initially called using the parameters \((s,t, F,f)\). The main component of this algorithm is the function \(\textsc{HitSet}\), a function that uses maximisers. \textsc{HitSet} outputs \(O(\text{poly}(f))\) edges \(H\) and an upper bound on $|P|$, which is stored in \(L\). The function \(\textsc{HitSet}\) has the following properties: (1) Either \(L = |P|\) or (2) One of the endpoints of the edges in \(H\) hits \(P\) and is an intermediate vertex.
 Thus, we either find our answer or recurse on two parts in Line 6 of Algorithm 1. One can check that if the running time of \textsc{HitSet} is polynomial in \(f\), then the running time of the query algorithm is \(O(f^{O(f)})\). In our case, this is \(O(1)\) since \(f = 2\).

\vspace{-0.2 cm}
\begin{algorithm}
\caption{Query Algorithm for the Exact Distance Oracle: \textsc{Query}$(s,t,F, f)$}
\label{alg:query}
\textbf{if} $st \cap \{e_1,e_2\} = \emptyset$ \textbf{return $|st|$}

\textbf{if} $f=0$ \textbf{return $\infty$}

$(L, H) \leftarrow \textsc{HitSet}(s, t, F)$\;

$\AN \leftarrow L$\;
\For{each $x$ in $H$}{
$\AN \leftarrow \min\{\text{ans}, \textsc{Query}(s, x, F, f-1) + \textsc{Query}(x, t, F, f-1)\}$;
}
\Return $\AN$;
\end{algorithm}
\vspace{-0.25 cm}

We now give an overview of how to find an intermediate vertex. Let us assume the worst case when the output of our maximisers always hit the path \(P\). The key to finding an intermediate vertex is first finding a clean vertex from the source and the destination, i.e., an \(s\)-clean and \(t\)-clean vertex. The authors first show how to design an appropriate maximiser that ensures we obtain a clean vertex. Once a \(s\)-clean and \(t\)-clean vertex are found, authors design an appropriate maximiser that gives us an intermediate vertex. This approach is succinctly described in the \Cref{fig:approachDuanR}. This figure illustrates behaviour of the \textsc{HitSet} algorithm in the worst case. In the best case, the first call to the maximiser may give us an intermediate vertex.

Though we have hidden many technical details, we want the reader to note the steps required to find an intermediate vertex in the worst case. Also, to use their maximisers, they needed $O(n^4)$ space which we reduced in this paper for $2$ faults. 

\subsection{Overview of our approach}

We define some new notations which will help us in describing our approach. Suppose we want to find $st \diamond F$ where $F=\{e_1,e_2\}$.

\begin{definition}(Primary and Secondary path)
\label{def:primsecond}

The shortest $st$ path is referred to as the primary path, which includes the edge $e_1$. The secondary path is $st \DIA e_1$.
\end{definition}

\begin{figure}[hpt!]
\centering
\begin{tikzpicture}[scale=0.5]
\coordinate (s) at (-6,8);
\coordinate (t) at (2,8);
\coordinate (u) at (0,8);
\coordinate (v) at (3,8);
\coordinate (e) at (0,6);
\filldraw [black] (s) circle (3pt);
\filldraw [black] (t) circle (3pt);

\draw[ line width=0.7mm,black,opacity=0.5] (s) node[left][black] () {$s$} -- (t) node[right][black] () {$t$};
\draw (-2.5,8) node [below][scale=0.8] {$e_1$};
\draw (-0.65,9.6) node [left][scale=0.8] {$e_2$};

\draw  (u) arc(0:180:2.5) [ line width=0.7mm,blue,opacity=0.5];


\filldraw [red] (-3,8) circle (2pt); 
\filldraw [red] (-2,8) circle (2pt); 
\draw[ line width=0.7mm,red,opacity=1] (-3,8) node[above][black][scale=0.8] () {$a$} -- (-2,8) node[above][black][scale=0.8] () {$b$};


\filldraw[red] (-0.7,9.75) circle (2pt); 
\filldraw[red] (-1.3,10.2) circle (2pt); 

\draw[ line width=0.7mm,red,opacity=1] (-1.3,10.2) node[above][black][scale=0.8] () {$c$} .. controls (-0.89,9.9) .. (-0.7,9.75) node[right][black][scale=0.8] () {} ;
\draw (-0.7,10) node [right][black][scale=0.8] {$d$};
\end{tikzpicture}
\caption{The primary path contains $e_1$ and the secondary path contains $e_2$. }
\label{fig:primarysecondary}
\end{figure}
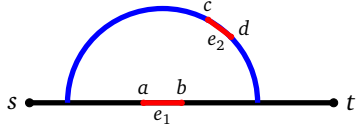

Following the above definition, we assume that the primary path contains $e_1$ (without loss of any generality). If the primary path contains none of the edges $e_1$ and $e_2$, then finding $st \DIA \{e_1,e_2\}$ is trivial. If the primary and the secondary path do not contain $e_2$, then we can use the single edge fault-tolerant data structure of Demestrescu et al. \cite{Demetrescu2008} to find $|st \DIA e_1|$. This is also an easy case. For the rest of the paper, we will assume that we are in one of the two cases:
\begin{enumerate}
\item  $e_1$ and $e_2$ both lie on the primary path 
\item  $e_1$ lies on the primary path and $e_2$ lies on the secondary path. \end{enumerate}

To apply our algorithm, we should quickly identify in which case we are operating. For this, we will introduce small weights to our originally unweighted graph similar to \cite{Bernstein2009, Hershberger2001, Parter2013, GuptaS18}. This will ensure unique shortest paths between any two vertices in the graph, even after the occurrence of two faults. Let us abuse notation and use  $|st|$ as the weight of the shortest path in this weighted graph. Also, let us define $w(e)$ as the weight of edge $e$ in this new graph. To find if $e_1 $ lies on the $st$ path, we just need to check if $|st| = |se_1| + w(e_1) + |e_1t|$. Note that we can fetch these quantities in $O(1)$ time if we have stored the shortest path tree from each vertex in the graph. So, we can check if both $e_1$ and $e_2$ lie on the $st$ path in $O(1)$ time. Similarly, we can find if $e_2$ is in secondary path  by checking is $|st \DIA e_1| = |se_2 \DIA e_1| + w(e_2) +\ |e_2t\DIA e_1|$.  Again, all the quantities can be found in $O(1)$ time using the one fault data structure of \cite{Demetrescu2008}. Henceforth, we will assume that we know the case in which we are operating.

We will show later that case (1) (both faults lie on the primary path) is similar to one of the subcases of the case (2) ($e_1$ on the primary path and $e_2$ on the secondary path). The bulk of our paper will be devoted to case (2). Henceforth, we will always assume the following setting: we want to find the shortest path from $s$ to $t$ avoiding $e_1=(a,b)$ and $e_2=(c,d)$. We will assume that $e_1 \in st$ and $e_2 \in st\DIA e_1$ where $a$ and $c$ are near to $s$ on $st$ and $st \DIA e_1$ paths respectively (See \Cref{fig:primarysecondary}).
Let $P = st\DIA \{e_1,e_2\}$ be a 2-decomposable path. We now define the notion of the detour of the path $P$. 

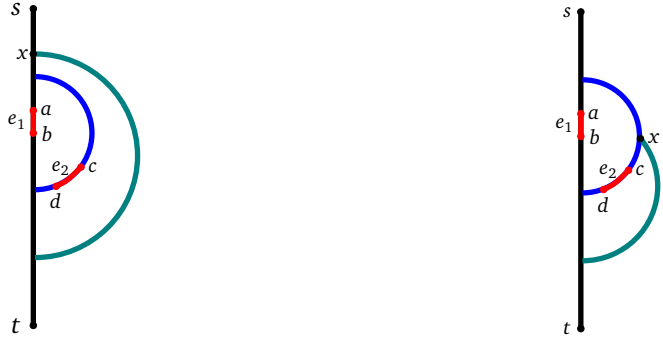
\begin{figure}[hpt!]
        \centering
        \begin{subfigure}[b]{0.4\textwidth}
        \centering
                \begin{tikzpicture}[scale=0.3]
                \coordinate (t) at (-5,-7);
\coordinate (s) at (-5,7);
\coordinate (u) at (-4.93,-1);
\coordinate (v) at (-4.93,-4);

\draw[ line width=0.7mm,black,opacity=0.5] (t) node[left][black] () {$t$} -- (s) node[left][black] () {$s$};


\draw (-5,2) node [left][scale=0.8] {$e_1$};
\draw (-3.8,-0.68) node [above][scale=0.8] {$e_2$};

\draw  (u) arc(-90:90:2.5) [ line width=0.7mm,blue,opacity=0.5];
\draw  (v) arc(-90:90:4.5) [ line width=0.7mm, teal ,opacity=1];
\filldraw [black] (-5,5) circle(4pt);
\draw (-4.85,5) node[left][scale=0.8] {$x$};

\draw[line width=0.7mm,red,opacity=1] (-5,1.5) -- (-5,2.5);
\node[right,black][scale=0.8] at (-5,1.5) {$b$};
\node[right,black][scale=0.8] at (-5,2.5) {$a$};

\filldraw [red] (-5,2.5) circle(4pt);
\filldraw [red] (-5,1.5) circle(4pt);

\draw[ line width=0.7mm,red,opacity=1] (-4,-0.85)  .. controls (-3.5,-.6) .. (-2.91,0) ;
\node[below,black][scale=0.8] at (-4,-0.85) {$d$};
\node[right,black][scale=0.8] at (-2.91,0) {$c$};

\filldraw [red] (-4,-0.85) circle(4pt);
\filldraw [red] (-2.91,0) circle(4pt);
\filldraw [black] (s) circle (4pt);
\filldraw [black] (t) circle (4pt);

                \end{tikzpicture}
                \caption{Detour starts at  $x$  on  primary path}
        \end{subfigure}        
                \hspace{1 cm}
        \begin{subfigure}[b]{0.4\textwidth}
                \centering
                \begin{tikzpicture}[scale=0.3]

                \coordinate (t) at (-5,-7);
\coordinate (s) at (-5,7);
\coordinate (u) at (-4.93,-1);
\coordinate (v) at (-4.93,-4);

\draw[ line width=0.7mm,black,opacity=0.5] (t) node[left][black][scale=0.8] () {$t$} -- (s) node[left][black][scale=0.8] () {$s$};

\draw (-5,2) node [left][scale=0.8] {$e_1$};
\draw (-3.8,-0.68) node [above][scale=0.8] {$e_2$};

\draw  (u) arc(-90:90:2.5) [ line width=0.7mm,blue,opacity=0.5];
\draw  (v) arc(-90:40:3.3) [ line width=0.7mm,teal,opacity=1];
\filldraw [black] (-2.39,1.4) circle (4pt);
\draw (-2.39,1.4) node [right][scale=0.8] {$x$};
\draw[ line width=0.7mm,red,opacity=1] (-5,1.5) -- (-5,2.5);
\node[right,black][scale=0.8] at (-5,1.5) {$b$};
\node[right,black][scale=0.8] at (-5,2.5) {$a$};

\filldraw [red] (-5,2.5) circle(4pt);
\filldraw [red] (-5,1.5) circle(4pt);

\draw[ line width=0.7mm,red,opacity=1] (-4,-0.85) .. controls (-3.5,-.6) .. (-2.91,0) ;
\node[below,black][scale=0.8] at (-4,-0.85) {$d$};
\node[right,black][scale=0.8] at (-2.91,0) {$c$};
\filldraw [red] (-4,-0.85) circle(4pt);
\filldraw [red] (-2.91,0) circle(4pt);
\filldraw [black] (s) circle (4pt);
\filldraw [black] (t) circle (4pt);

                \end{tikzpicture}
                \caption{Detour starts at $x$  on secondary path}

        \end{subfigure}
        \caption{Start of the detour}
        \label{fig:detourpoints}
\end{figure}
\begin{definition} (Detour of a path $P$, start and end of a detour)

The detour of the path $P$ is $P \setminus (st \cup st \DIA e_1)$. 
The start of the detour is the first vertex on $P$, after which $P$ deviates from $st \cup st \DIA e_1$. If this vertex is on $st$, then we say that the detour starts on the primary path. If it is on $st \DIA e_1 \setminus st$, then we say that it starts on the secondary path. Similarly, we can define the end of the detour.
\end{definition}

Note that the above definition of detour differs slightly from what is usually used in literature. Specifically, the start of the detour is generally defined as the vertex on the path $P$, after which $P$ deviates from $st$. In our definition, the detour can start on the primary path or the secondary path. Similarly, the detour may end on the primary or the secondary path.

In this paper, we use an approach similar to that used in \cite{Demetrescu2008}. Essentially, our objective is to identify vertices along the path $P = st \DIA \{e_1, e_2\}$ that are `close to the detour' from the vertices $s$ and $t$. Before describing what we exactly mean by `close to the detour', let us define a suitable set of vertices called landmark vertices. These landmark vertices play a crucial role in the forthcoming definitions.

\begin{definition} (Landmark vertices)
\label{def:landmark}
Let $\LL_i$ be the set of vertices sampled from $V$ with a probability of $\frac{c\log n}{2^i}$, where $0 \leq i \leq \log n$ and $c>1$ is a constant. The size of $\LL_i$ is $\TL\left(\frac{n}{2^i}\right)$ with a probability $\ge (1-\frac{1}{n^c})$.
\end{definition}
\vspace{-0.2 cm}
We are now ready to describe our main contribution in this paper via the following two definitions:

\begin{definition} ( Close vertex to a faulty edge from $s$)

Let $P = st \DIA \{e_1,e_2\}$. A vertex $x \in P$ is called close vertex  to edge  $e_1$  from $s$ if $x \in \LL_{\EL}$ (for some $\EL \in [0, \log n]$), $sx$ avoids $\{e_1,e_2\}$ and $|xe_1 \DIA \{e_1,e_2\}| = O(2^{\EL})$. The definition is similar for close vertex to $e_2$. Similarly, we can define close vertex to a faulty edge from $t$.

\end{definition}

In most parts of the paper, we will find a close vertex from $s$. So, for brevity, we will drop the term "from $s$" where it is clear from the context that we are finding a close vertex from $s$. We aim to find a close vertex with one more special property. Next, we define the detour close vertex or D-close vertex. This will be close to faulty edge $e_1$ or $e_2$ depending on where the detour starts. 

\begin{definition}(A detour close or D-close vertex to a faulty edge from $s$)

Let $P = st \DIA \{e_1,e_2\}$. Let us assume that the detour of $P$ starts on the primary path.
A vertex $x \in P$ is called D-close to $e_1$ if $x$ is close to $e_1$ and the detour of $P$ starts in $xe_1 \DIA \{e_1,e_2\}$ path ($xe_1 \DIA \{e_1,e_2\}$ represents the path from $x$ to $e_1$ on the primary path). If the detour of $P$ starts on the secondary path, the definition of D-close to $e_2$ is similar (with $xe_1\DIA \{e_1,e_2\}$ replace by $xe_2 \DIA \{e_1,e_2\}$, which represents the path from $x$ to $e_2$ on the secondary path). Similarly, we can define a $D$-close vertex from $t$.
\end{definition}
Our central intuition is that once we have found a $D$-close vertex from $s$ and $t$, we can use  Ren and Duan's approach to find an $s$-clean or $t$-clean vertex. Note that finding a $D$-close vertex is not the main aim. The aim is to find an intermediate vertex. If we find a clean or intermediate vertex at any point in our algorithm, we abandon our search for a $D$-close vertex as our job is already done. We now design our version of \textsc{HitSet} algorithm in the \Cref{alg:query}.  In \Cref{fig:ourapproach}, we show behaviour of our \textsc{HitSet} algorithm in the worst case: the first call may give us a $D$-close vertex from $s$. The second call gives us a $D$-close vertex from $t$. After that, we will find either an $s$-clean vertex or a $t$-clean vertex, and the remaining figure is same as \Cref{fig:ourapproach}.
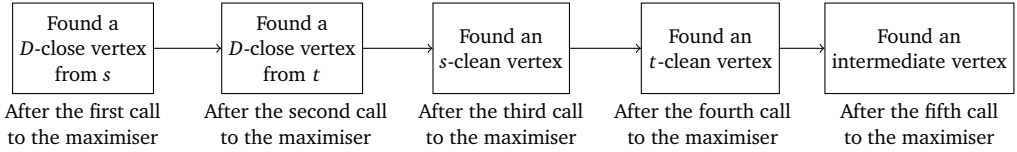
\begin{figure}[hpt!]
\centering
\begin{tikzpicture}[node distance=2.75cm, box/.style={draw, rectangle, minimum width=1cm, minimum height=1.5cm,scale=0.8}]

\draw (0,0)
    node[align=center,minimum size=3cm,draw,box] (box1)
    {Found a \\$D$-close vertex\\ from $s$};
\draw (0,0)
    node[right of=box1,align=center,minimum size=3cm,draw,box] (box2)
    {Found a \\$D$-close vertex \\ from $t$};

\draw (0,0)
    node[right of=box2,align=center,minimum size=3cm,draw,box] (box3)
    {Found an \\$s$-clean vertex};
\draw (0,0)
    node[right of=box3,align=center,minimum size=3cm,draw,box] (box4)
    {Found an \\$t$-clean vertex};
\draw (0,0)
    node[right of=box4,align=center,minimum size=3cm,draw,box] (box5)
    {Found an \\intermediate vertex};

\draw[->] (box1) -- (box2);
\draw[->] (box2) -- (box3);
\draw[->] (box3) -- (box4);
\draw[->] (box4) -- (box5);

\node[align=center,minimum size=0.5cm, scale=0.8] at (0,-1)
    {After the first call\\ to the maximiser};
\node[align=center,minimum size=0.5cm, scale=0.8] at (2.8,-1)
    {After the second call\\ to the maximiser};
\node[align=center,minimum size=0.5cm, scale=0.8] at (5.5,-1)
    {After the third call\\ to the maximiser};
\node[align=center,minimum size=0.5cm, scale=0.8] at (8.15,-1)
    {After the fourth call\\ to the maximiser};
\node[align=center,minimum size=0.5cm, scale=0.8] at (11,-1)
    {After the fifth call\\ to the maximiser};
\end{tikzpicture}
\caption{After at most five consecutive calls to the maximiser, we are sure to find an intermediate vertex.}
\label{fig:ourapproach}
\end{figure}
In essence, our result implies we should first find a vertex on $P$ that is  $D$-close to the fault. Similar kind of strategy can be found in \cite{Demetrescu2008}. This strategy not only decreases the query time but also improves the space taken by the algorithm. Finding a  $D$-close vertex is easy when there is one fault. In our paper, we show how to find a $D$-close vertex (using maximisers) when there are two faults.
Unfortunately, we will not be able to use the maximisers of Ren and Duan\cite{DuanR22} as its size is $O(n^4)$. We will design new maximisers different from \cite{DuanR22}, taking just $\TL(n^2)$ space. In fact, all the maximisers used in our  $\HIT$ algorithm take $\TL(n^2)$ space. 
This completes the description of our overview.
\vspace{-0.4 cm}

\section{Preliminaries}
\label{sec:prelim}

Till now, we have discussed about the shortest path and shortest distance between any two vertices. In our paper, we also use notion of the shortest distance between a vertex and an edge. For an edge $e = (u,v)$ and a vertex $s$, $|se|$ is defined as the minimum of $|su|$ and $|sv|$.  Also, $|se \DIA F| = \min\{|su \DIA F|, |sv \DIA F| \}$.  When considering an arbitrary path $P$, in which vertex $u$ appears before vertex $v$, $P[u,v]$ refers to the subpath from $u$ to $v$, and $|P[u,v]|$ represents the length of this subpath. Also, $V(e)$ denotes the set containing both the endpoints of $e$, i.e., $V(e) = \{u,v\}$. $T_s$ denotes the shortest path tree rooted at $s$. For any vertex $x$ in $T_s$, $T_s(x)$ represents the subtree rooted at node $x$ in $T_s$. In our algorithm, we may need to quickly find the least common ancestor (\LCA) of any pair of vertices $u$ and $v$ in $T_s$. We will utilise the following result to find the LCA efficiently:

\begin{lemma}
\label{lem:lca}(See \cite{BenderF00} and its references) Given a tree $T$ with $n$ vertices, we can construct a data structure of size $O(n)$ in $O(n)$ time, allowing us to answer LCA queries in $O(1)$ time.
\end{lemma}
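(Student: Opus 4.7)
The plan is to reduce the LCA problem to a Range Minimum Query (RMQ) problem on a specially structured array, and then solve that RMQ problem in $O(n)$ preprocessing and $O(1)$ query time using a Four Russians style table lookup.

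First I would perform a DFS on $T$, recording the Euler tour $E[1\ldots 2n-1]$ (listing each vertex every time it is visited) together with the depth array $D[i] = \text{depth}(E[i])$ and, for every vertex $v$, the index $\text{first}(v)$ of its first occurrence in $E$. A standard observation is that for any two vertices $u,v$, their LCA is exactly $E[k]$, where $k = \arg\min_{i \in [\text{first}(u), \text{first}(v)]} D[i]$. Thus answering LCA queries reduces to answering RMQ queries on the array $D$, which has the crucial property that $|D[i+1]-D[i]|=1$ (the $\pm 1$-RMQ problem). This reduction is linear in time and space.

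Next I would solve $\pm 1$-RMQ on $D$ in $O(n)/O(1)$ as follows. Set the block size $b := \tfrac{1}{2}\log n$ and partition $D$ into $\lceil (2n-1)/b\rceil$ consecutive blocks. For each block compute its minimum value and position, giving a summary array $D'$ of length $O(n/\log n)$. Build a classical sparse table on $D'$ so that any range minimum over full blocks is answered in $O(1)$; this costs $O(|D'|\log|D'|)=O(n)$ space and time. For queries that fall inside a single block or at the boundary blocks, exploit the $\pm 1$ property: every block, normalized by subtracting its first entry, is encoded by a $\pm 1$-sequence of length $b-1$, giving at most $2^{b-1}=O(\sqrt{n})$ block types. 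For each type and each pair $(p,q)$ with $1\le p\le q\le b$, precompute the position of the minimum; this table has size $O(\sqrt n \cdot b^2) = O(\sqrt n \log^2 n) = o(n)$ and is built in the same time. To answer an arbitrary RMQ$(i,j)$, look up the in-block minimum for the suffix of $i$'s block, the in-block minimum for the prefix of $j$'s block, and the sparse-table minimum over the full intermediate blocks, and return the best of the three in $O(1)$.

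Finally I would lift this back to LCA: given $(u,v)$, compute $i=\text{first}(u)$, $j=\text{first}(v)$ (swapping if $i>j$), invoke the RMQ machinery on $D[i\ldots j]$ to obtain an index $k$, and return $E[k]$. All steps are $O(1)$, and the total preprocessing is $O(n)$ time and space, proving the lemma.

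The main conceptual obstacle is the within-block table: one must argue that normalizing each block to a $\pm 1$-signature (rather than storing raw values) is legitimate, since the \emph{position} of the minimum depends only on the signature and not on the additive offset. Once this is observed, the Four Russians reduction of space from $\Theta(n\log n)$ (a naive sparse table on $D$) down to $O(n)$ falls out, and the $O(1)$ query bound follows from combining three constant-time lookups.
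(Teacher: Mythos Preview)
Your proposal is correct and is in fact the standard Bender--Farach-Colton argument that the paper is citing. Note, however, that the paper does not actually prove this lemma at all: it is quoted as a black-box result from \cite{BenderF00} and used only as a tool, so there is no in-paper proof to compare against. Your write-up faithfully reconstructs the cited proof (Euler tour reduction to $\pm 1$-RMQ, sparse table on block minima, and Four Russians lookup for in-block queries), which is exactly what the citation points to.
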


We will use the following lemma to show the existence of landmark vertex (defined in \Cref{def:landmark}) with high probability on sufficiently long paths. We state this lemma without proof.

\begin{lemma}
\label{lem:highprobability}
Let $U_i$ be the set of paths such that length of each path is $\Omega(2^i)$. If size of $U_i$ is poly($n$), then with a probability $\ge 1-1/n$, for all $0 \le i \le \log n$ and for each path $P \in U_i$, there exists a vertex in $\LL_i$ that hits $P$.
\end{lemma}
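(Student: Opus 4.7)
\textbf{Proof plan for \Cref{lem:highprobability}.} The argument is a standard sampling-plus-union-bound calculation, so my plan is to make the constants explicit and then apply two nested union bounds. First, I would fix an index $i \in [0,\log n]$ and a single path $P \in U_i$. Because $|P| = \Omega(2^i)$, there is an absolute constant $\alpha>0$ (coming from the hidden constant in the $\Omega$) such that $P$ contains at least $\alpha \cdot 2^i$ distinct vertices. Each such vertex is placed in $\LL_i$ independently with probability $p_i = c\log n / 2^i$ (taking $p_i = 1$ when this exceeds $1$, in which case the statement is trivial). Thus the probability that $\LL_i \cap V(P) = \emptyset$ is at most
\[
(1-p_i)^{\alpha \cdot 2^i} \;\le\; \exp\!\left(-\alpha c \log n\right) \;=\; n^{-\alpha c}.
\]

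Next I would apply a union bound over all paths in $U_i$. By hypothesis $|U_i| \le n^{k}$ for some constant $k$, so the probability that \emph{some} path in $U_i$ is missed by $\LL_i$ is at most $n^{k-\alpha c}$. A second union bound over the at most $\log n + 1$ choices of $i$ then gives a failure probability of at most $(\log n+1)\cdot n^{k-\alpha c}$. Choosing the constant $c$ in \Cref{def:landmark} large enough (e.g.\ $c \ge (k+2)/\alpha$) makes this at most $1/n$, as required.

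The only mildly delicate point, and the one I would be careful about in the write-up, is guaranteeing that a path of length $\Omega(2^i)$ actually contributes $\Omega(2^i)$ \emph{independently sampled} vertices: this is fine in our unweighted setting because an edge-length bound translates directly into a lower bound on the number of internal vertices, and each vertex is independently drawn into $\LL_i$. One should also note that for small $i$ (where $c\log n / 2^i \ge 1$) the claim is trivial because every vertex of $G$ is deterministically in $\LL_i$, so those values of $i$ do not even need to enter the union bound. No other step poses a real obstacle; this lemma will be invoked by the query algorithm to assert that all the ``long enough'' detour and shortest-path segments encountered at scale $2^i$ are hit by some landmark in $\LL_i$ with high probability, justifying the definitions of close and $D$-close vertices used throughout the rest of the paper.
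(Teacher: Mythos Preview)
Your proposal is correct and is exactly the standard sampling-plus-union-bound argument one would expect here. Note that the paper itself explicitly states this lemma without proof (it is folklore), so there is no paper proof to compare against; your write-up would in fact supply the missing details.
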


We use another concept from the works of \cite{ChechikCFK17} which is named as { \em trapezoid} of a path or the near vertices of a path. Before that, let us describe another notion below: 

\begin{definition}(Ball from a vertex of length $\EL$, $\BA_G(u,\EL)$ )

\noindent A ball from a vertex $u$ of length $\EL$ is a set containing all the vertices at distance $\le \EL$ from $u$. We denote this ball by $\BA_G(u,\EL)$. Formally, $\BA_G(u,\EL) = \{v \ |\ |uv| \leq \EL \}$.
\end{definition}

Let $P=st \DIA F$. Let us now define trapezoid of that path.

\begin{definition}(Trapezoid of $P$ in $G \setminus F$, $\TR_{G \setminus F}(P)$)

\noindent Let $P$ be an arbitrary path from $s$ to $t$ in $G \setminus F$. Define the trapezoid of $P$ in $G \setminus F$ as:

 $$\TR_{G \setminus F}(P)=\Big(\bigcup_{u \in P \setminus \{s,t\}} \BA_{G/F}(u, \epsilon \min\{|P[s,u]|,|P[u,t]|\})\Big)\setminus \{s,t\}$$

 where $\EP \ge 0$ is a parameter. In words, for each vertex $u \in P$, we include all vertices that are in the ball of length $\EP \EL_u$ in $G \setminus F$ where $ \EL_u$ is the minimum of the length of $u$ from $s$ and $t$ on path $P$. All these vertices lie in the trapezoid of $P$. A vertex is  \em{far away}  from $P$ if it does not lie in the trapezoid of $P$.
\end{definition}
Chechik et al.\cite{ChechikCFK17} introduced the notion of the trapezoid and designed a data structure $\FT$. Though we will not use the $\FT$ data structure in our paper, we will crucially use the notion of a trapezoid.
In \cite{DuanR22}, the authors introduced a technique called the maximiser for a pair of vertices $s$ and $t$.
\begin{equation}
\label{eq:1} 
\DD(s,t, x,y, b_1, b_2) = \text{argmax}_{F \in E^2}\left\{ \text{conditions depending on $s$, $t$, $x$, $y$, $b_1$ and $b_2$ }\right\}
\end{equation}

In the above maximiser, $s$ is the source, and $t$ is the destination. Variables $x$ and $y$ can be any two arbitrary vertices. The boolean variables $b_1, b_2$ gave some extra conditions on the pair of faults they are considering. Let $e_1^*$ and $e_2^*$ be the edges returned (or stored) by the above maximiser. Note that we can also store the path length $|st \DIA \{e_1^*,e_2^*\}|$ without increasing the size.

We first describe the power of these maximisers. Let us assume that we want to find the shortest path from $s$ to $t$ avoiding  $\{e_1,e_2\}$. If $\{e_1,e_2\}$ satisfy the conditions of the maximiser $\DD$, then we can show that edges returned by the maximiser $\DD$ hit the path $st \DIA \{e_1,e_2\}$ or we have got the length of the path $st \DIA \{e_1,e_2\}$.

\begin{lemma}
\label{lem:notinpathP}
        Let $P=st \diamond \{e_1,e_2\}$ and $\DD(s,t,x,y) = \{e_1^*,e_2^*\}$.  If $\{e_1,e_2\}$ satisfies conditions of $\DD$, then  either $e_1^*$ or $e_2^*$  lie on $P$ or $|P|=|st \DIA \{e_1^*,e_2^*\}|$.
\end{lemma}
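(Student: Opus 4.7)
The proof is short and rests on two simple observations about the $\text{argmax}$ structure of $\DD$. By its very definition, $\DD$ returns the pair of edges in $E^2$ that, while satisfying the stated conditions, maximises the avoidance distance $|st \DIA F|$. Since $\{e_1,e_2\}$ is assumed to satisfy these conditions by the hypothesis of the lemma, it is a feasible competitor in the $\text{argmax}$, and so
$$|st \DIA \{e_1^*, e_2^*\}| \;\ge\; |st \DIA \{e_1, e_2\}| \;=\; |P|.$$

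For the other direction, I would argue by contrapositive of the first alternative. Suppose neither $e_1^*$ nor $e_2^*$ lies on $P$. Then $P$ remains a valid $s$-to-$t$ path in $G \setminus \{e_1^*, e_2^*\}$, and hence the shortest such path is no longer than $P$:
$$|st \DIA \{e_1^*, e_2^*\}| \;\le\; |P|.$$
Combining the two inequalities forces equality, $|P| = |st \DIA \{e_1^*, e_2^*\}|$, which is precisely the second alternative of the lemma. Thus at least one of the two listed conclusions must hold.

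There is essentially no technical obstacle here; the lemma is best viewed as a clean formal interface that lets one consume a maximiser without caring about its internal structure, beyond verifying at each call site that $\{e_1,e_2\}$ really satisfies its conditions. All the real work therefore lives elsewhere: in designing the various maximisers (for instance, the ones driving the five-step $\HIT$ pipeline of \Cref{fig:ourapproach}) so that their conditions pinpoint exactly the pairs of faults one wishes to handle, while keeping the total space at $\TL(n^2)$.
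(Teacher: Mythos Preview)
Your proof is correct and matches the paper's own argument essentially line for line: both derive $|P|\le |st\DIA\{e_1^*,e_2^*\}|$ from the $\arg\max$ property and feasibility of $\{e_1,e_2\}$, and the reverse inequality from the assumption that $e_1^*,e_2^*\notin P$. The only difference is cosmetic ordering; no additional ideas are needed.
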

\begin{proof}
Let us assume that neither $e_1^*$ nor $e_2^*$ lies on $P$.
    Since $\{e_1,e_2\}$ also satisfies the condition given in $\DD$,  $|P| \leq |st \DIA \{e_1^*,e_2^*\}|$.

        Since $P$ also avoids $e_1^*$ and $e_2^*$,  $|st \DIA \{e_1^*,e_2^*\}| \leq |P|$. This implies $|P| = |st \DIA \{e_1^*,e_2^*\}|$.
\end{proof}

Using the above lemma, either we find the length of the path $st \DIA \{e_1,e_2\}$ or we get a very small set that hits $P$. We now formally describe the maximisers created in \cite{DuanR22} and how they use the notions of $s$-clean and $t$-clean vertices. Let $bits$ be an array of size 2 (we replace $b_1,b_2$ used by them by $bits$ for our presentation). Each cell in bits can either be 0 or 1. We will now define four maximisers $\DD_{bits}(s,t,x,y)$ depending on the value in $bits$. The conditions of the maximisers are as follows:(1) $sx$ and $ty$ are intact from faults and (2)  If $bits[0] = 1$, then  $x$ is $s$-clean. Similarly, if $bits[1]=1$, then  $y$ is $t$-clean. We give the complete definition of one of the maximisers, $\DD_{01}(s,t, x,y) = \text{argmax}_{\{e_1^*,e_2^*\} \in E^2}\left\{ \text{$sx$ is intact from  faults $\{e_1^*,e_2^*$\} and $y$ is $t$-clean from $\{e_1^*,e_2^*\}$ } \right\}$.

In the above definition, $s,t,x,y \in V$. Thus, the total size of $\DD_{01}$ is $O(n^4)$. It is easy to observe that all the other three maximisers will also be of size $O(n^4)$.  We have already discussed in \Cref{sec:overview} about how these maximisers are used. In this paper, we have modified these maximisers to have an oracle with size $\TL(n^2)$.

\section{Oracle for two faults}
\label{sec:twofault}


Let us reiterate our setting:  we want to find the shortest path from $s$ to $t$ avoiding $e_1=(a,b)$ and $e_2=(c,d)$. We will assume that $e_1 \in st$ and $e_2 \in st\DIA e_1$ where $a$ and $c$ are near $s$ on $st$ and $st \DIA e_1$ paths, respectively. Let $P = st\DIA \{e_1,e_2\}$ be a 2-decomposable path.

First, we describe some basic data structures used in our algorithm. As mentioned in \Cref{sec:prelim}, we will add small weights to each edge to ensure all the shortest paths are unique. Let $G'$ be the graph obtained after adding weights. In $G'$, we will find the shortest path tree from each vertex. The space taken by all shortest-path trees is $O(n^2)$. Similarly, whenever we need to find $st \DIA e_1$, we will apply the algorithm of \cite{Demetrescu2008} on $G'$. Again the algorithm in \cite{Demetrescu2008} takes $O(n^2)$ space. Apart from this data structure, we will never use $G'$ in any of our other data structures. For the rest of the paper, our graph is unweighted.

Our new data structure is the maximiser $\DD_{bits}(s,t,dist,clean)$. It stores the pair of edges $(e_1^*,e_2^*)$ that maximises the distance from $s$ to $t$ subject to conditions determined by various parameters. We will also store the length of $st \DIA \{e_1^*,e_2^*\}$. Formally,

\begin{equation}
\label{eq:maximiser}
\DD_{bits}(s,t,dist,clean) = \arg\max_{F \in E^2}\{ \text{conditions that depend on $bits, s, t, dist$, and $clean$} \}
\end{equation}

One of the common conditions in all our maximisers in this section is as follows:
\begin{align*}
\text{Common Condition: A pair of edge $(e,e')$ will be considered in maximiser only if $e \in st$, $e' \in st \DIA e$.}  
\numberthis \label{eq:common}
\end{align*}
Note that this condition is satisfied by the pair $(e_1,e_2)$. We now describe other parameters of the maximiser.
$bits$, $dist$, and $clean$ are sets of size two implemented using arrays. Each element of $bits$ can either be 0,1, or 2. If $bits[0] = 0$, then we have not found  a $D$-close or $s$-clean vertex from $s$. If $bits[0]=1$, then we have found a $D$-close vertex from $s$. Moreover, if $bits[0] =2$, then we have found an $s$-clean vertex. We store the corresponding $s$-clean vertex in $clean[0]$. The same conditions hold for $bits[1]$ and $clean[1]$, which tells us whether we have found a $D$-close vertex from $t$ or a $t$-clean vertex and stores the $t$-clean vertex if found. $dist[0]$ and $dist[1]$ normally store the distances from $s$ and $t$ respectively. There are nine versions of our maximiser $\DD$, which depend on the value of $bits$. We will define them on the fly, as and when our algorithm needs them. We will discuss the space taken by these maximisers and our oracle in \Cref{sec:space}.
Before we describe our algorithm, let us take care of some corner cases.

We will always use maximiser so that $\{e_1,e_2\}$ satisfies its conditions. Let $\{e_1^*,e_2^*\}$ be the edges returned by the maximiser. If both $e_1^*$ and $e_2^*$ do not lie on $P$, then using \Cref{lem:notinpathP}, $|P| = |st \DIA \{e_1^*,e_2^*\}|$. Thus, we can easily find $|P|$ in this case. In the worst case, $e_1^*$ and/or $e_2^*$ may hit $P$. We will describe our algorithm for the worst case henceforth.

In our algorithm, let us assume that we have found a vertex $x$ which is one of the endpoints of $e_1^*$ or $e_2^*$ and also lies on the path $P$. First, we check if $x$ lies in the first or last segment of $P$ or $x$ is an intermediate vertex. For this, we need to check if $sx$ (and $tx$) is intact from faults $\{e_1,e_2\}$. Using \Cref{lem:lca}, this can be done in $O(1)$ time using appropriate $\LCA$ query in $T_s$  (and $T_t$). Similarly, we can check if $T_s(x)$ contain faults $\{e_1,e_2\}$ in $O(1)$ time. Thus, processing $x$ takes $O(1)$ time.

 If $x$ happens to be  $s$-clean, we will set $clean[0] = x$ in the subsequent calls to the maximiser. If $x$ is $D$-close from $s$, then we can recurse our algorithm on the path $P[x,t]$ on which $x$ is acting as the source. Formally, if $x$ is $D$-close from $s$, our answer is $|sx| + \QU(x,t,\{e_1,e_2\},2)$.
It may be that the primary path $xt$ contains both the faults $e_1$ and $e_2$. In that case,  we will use our algorithm in \Cref{both} to find $|P[x,t]|$. Similarly, if the secondary path from $x$ to $t$ does not contain any faulty edge, then we can find $|xt \DIA \{e_1,e_2\}|$ using the single fault algorithm of \cite{Demetrescu2008}. Again, these are easy cases for us. Henceforth, we assume that for any $D$-close vertex $x$, the primary and the secondary path from $x$ to $t$ contain faulty edges.

Finally, we will describe our approach to finding a $D$-close vertex from $s$, an $s$-clean vertex, and the intermediate vertex. Approach to finding these vertices from the destination side $t$ is symmetrical.

\section{Finding a $D$-close vertex from $s$}
\label{sec:findDclose}

By definition, a $D$-close vertex can be on the primary path or the secondary path depending on where the detour of $P$ starts. 
We deal these cases seperately.

\subsection{The detour  starts on the primary path}
\label{sec:startprimary}
We first describe the maximiser. Since we have still not found a $D$-close vertex from $s$, $bits[0]=0$. Our maximiser is $\DD_{0\alpha}(s,t,dist,clean)$, where $\alpha $ can be $\{0,1,2\}$. We  set $dist[0] = O(2^{\lfloor\log |se_1|\rfloor})$. 

The conditions of the maximiser are as follows:

\begin{align*}
\DD_{0\alpha}(s,t,dist,clean) =  \arg\max_{F \in E^2}\Bigl\{& dist[0]  \text{ distance from $s$ on the primary path is intact from $F$ and other}\\ & \text{ conditions based on $\alpha$ from $t$ side.}\Bigl\}
 \numberthis \label{eqn:3}
\end{align*}
\begin{figure}
\centering
\begin{tikzpicture}[scale=0.7]
\coordinate (s) at (-6,8);
\coordinate (a) at (-7,4);
\coordinate (x) at (-5,6);
\coordinate (y) at (-5,2);
\coordinate (z) at (-2,2);
\coordinate (c') at (-3.8,4);
\coordinate (d) at (-4.3,3.3);
\coordinate (b) at (-7.5,3.5);
\coordinate (c) at (-5,3.8);
\coordinate (t) at (-6.3,2.7);

\draw[ line width=0.7mm,black,opacity=1] (s) .. controls (-8,6) and (-4,6) .. (a);
\filldraw [black] (s) circle (3pt);
\filldraw [red] (a) circle (3pt);
\draw[ line width=0.7mm,black,opacity=1] (s) .. controls (-8,6) and (-3,8) .. (x);
\filldraw [black] (x) circle (3pt);
\draw[ line width=0.7mm,black,opacity=1] (x) .. controls (-5,4) and (-4,3) .. (y);
\draw[ line width=0.7mm,black,opacity=1] (x) .. controls (0,4) and (-2,3) .. (z);
\draw[ line width=0.7mm,black,opacity=1] (x) .. controls (-5,4) and (-4,3) .. (y);
\draw[ line width=0.7mm,black,opacity=1] (y) .. controls (-4,2.5) and (-2,2.5) .. (z);

\filldraw [red] (d) circle (3pt);
\filldraw [red] (b) circle (3pt);

\draw[ line width=0.7mm,red,opacity=1] (c') -- (d);
\draw[ line width=0.7mm,red,opacity=1] (a) -- (b);
\draw[ densely dashed, line width=0.7mm,blue,opacity=1] (x) .. controls (-4,5) and (-4,5) .. (c');
\filldraw [red] (c') circle (3pt);
\draw (x) node[left][scale=0.8] {$x$};
\draw (c') node[right][scale=0.8] {$c'$};
\draw (-4.3,3.5) node[above][scale=0.8] {$d$};
\draw (s) node[above][scale=0.8] {$s$};
\draw (-7,4.2) node[above][scale=0.8] {$a$};
\draw (b) node[below][scale=0.8] {$b$};
\draw[ densely dashed, line width=0.7mm,magenta,opacity=1] (x) .. controls (-5,5) and (-6,5) .. (c);
\filldraw [red] (c) circle (3pt);
\draw (c) node[left][scale=0.8] {$c$};
\draw[ line width=0.7mm,red,opacity=1] (c) -- (d);

\draw[ densely dashed, line width=0.7mm,magenta,opacity=1] (d) .. controls (-3.8,2.7) and (-3.8,2.7) .. (t);

\filldraw [red] (d) circle (3pt);
\filldraw [black] (x) circle (3pt);
\filldraw [black] (t) circle (3pt);
\draw (-6.3,3,2) node[above][scale=0.8] {$t$};

\end{tikzpicture}
\caption{If $d \in T_s(x)$, the paths $sd \diamond e_1$ is not a prefix of  $st \diamond e_1$}
\label{fig:dnotinTsx}
\end{figure}
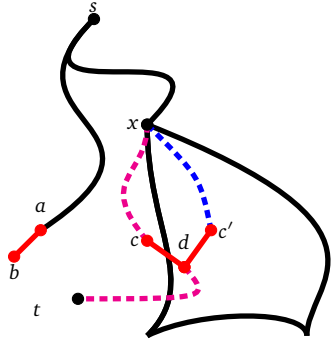

 Let $x $ be an endpoint of an edge returned by the maximiser. If $x$ is an intermediate vertex, we abandon our search for $D$-close vertex. So, let us assume that $x$ lies on the first segment of $P$. We will show that $x$ lies on the $st$ path or $x$ is $s$-clean.

  \begin{lemma}
  \label{claim:xinst}
        Let $x$ be an endpoint of an edge returned by the maximiser (\ref{eqn:3}) such that $sx$ is intact from failures $\{e_1,e_2\}$. If $x$ does not lie on the $st$ path, then $x$ is an $s$-clean vertex.
  \end{lemma}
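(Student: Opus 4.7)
The plan is a proof by contradiction: assume $x \notin st$ but some endpoint $v$ of $e_1$ or $e_2$ lies in $T_s(x)$, and derive a contradiction. The first preparatory step is to locate $x$ on $P$. Since $sx$ is intact from $\{e_1,e_2\}$, the unique shortest $s$-to-$x$ path in $G'$ (i.e., the $T_s$-path) avoids both faults; by the optimality of $P$ this path must coincide with $P[s,x]$, so $x$ lies on the first segment of $P$. Further, since the detour of $P$ begins on the primary path at some $z^{*} \in st$, the same prefix-shortest-path argument gives $P[s,z^{*}]$ equal to the $st$-prefix ending at $z^{*}$; combined with $x \notin st$ this forces $x$ to lie strictly after $z^{*}$ on $P$, hence either on the detour itself or on the suffix $P[z^{**},t]$ after the detour ends at $z^{**}$.

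The next step is to case-split on which endpoint $v$ lies in $T_s(x)$. If $v \in \{a,b\}$, then $v \in st$ and uniqueness of shortest paths in $G'$ forces the $T_s$-path from $s$ to $v$ to be the prefix of $st$ ending at $v$; since this path passes through $x$, we conclude $x \in st$, a contradiction. If $v \in \{c,d\}$, I would split again according to whether the $T_s$-path from $s$ to $v$ uses the edge $e_1$. If it does, then $a$ or $b$ lies either strictly between $s$ and $x$ on this path (contradicting $sx$ intact from $e_1$) or strictly inside $T_s(x)$ (reducing to the previous case and forcing $x \in st$). If it does not, then by uniqueness the $T_s$-path from $s$ to $v$ must equal $Q[s,v]$, the prefix of the secondary path $Q = st \DIA e_1$, and in particular $x$ lies on $Q$.

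The final step is to contradict $x \in Q$ using the detour structure of $P$. If $x$ lies on the detour, then $x \notin st \cup Q$ by definition, contradicting $x \in Q$. Otherwise $x$ lies on the suffix $P[z^{**},t]$, and I would split on where the detour ends: if on the primary path, then $P[z^{**},t] \subseteq st$ by the same shortest-path argument as for the prefix, contradicting $x \notin st$; if on the secondary path, then $P[z^{**},t] \subseteq Q$ and must lie strictly past $e_2 = (c,d)$ on $Q$, so $x$ sits past $d$ on $Q$, while $x \in Q[s,v]$ with $v \in \{c,d\}$ forces $x$ to sit before $v$ on $Q$---an impossibility. The hard part, and the main obstacle of the argument, is exactly this last sub-case where the $T_s$-path to $v \in \{c,d\}$ avoids $e_1$; here the contradiction is not immediate from unique shortest paths but requires combining the position of $x$ on $Q$ with the detour structure (in particular, the fact that for a detour starting on the primary path, the suffix is confined either to $st$ or to the portion of $Q$ lying past $e_2$). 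The remaining cases reduce almost mechanically to uniqueness of shortest paths in $G'$ together with the ``intact from failures'' constraint on $sx$.
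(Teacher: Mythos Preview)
Your argument is essentially correct and reaches the conclusion, but it is more circuitous than the paper's, and several of your intermediate claims---in particular that $P[s,z^{*}]$ is the $st$-prefix ending at $z^{*}$, and the analogous statements for the suffix $P[z^{**},t]$---require justification you leave implicit. (They can be filled in: one must argue that $z^{*}\in st[s,a]$ because otherwise $P$ would continue along $st[z^{*},t]$ rather than detouring, and symmetrically for $z^{**}$; but this is not automatic.)

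The main structural difference from the paper is in how you treat $v\in\{c,d\}$. You unify both into the conclusion ``$x\in Q$'' and then contradict this via the detour/suffix structure of $P$. The paper instead dispatches $c$ and $d$ separately and more directly. For $c$: if $c\in T_s(x)$ then the tree path $sc$ avoids $e_1$, so by uniqueness $sc=Q[s,c]$; thus $P[s,x]$ lies on the secondary path and the detour begins there, contradicting the standing assumption of the section. For $d$: a one-line uniqueness argument---since (by the previous step) $c\notin T_s(x)$, the parent of $d$ in $T_s$ is some $c'\neq c$, so the tree path $sd$ avoids both $e_1$ and $e_2$; but by uniqueness $sd$ must equal $Q[s,d]$, which \emph{does} use $e_2$. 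No reasoning about the end of the detour is needed.

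Even within your own framework the suffix analysis is superfluous: once you have $P[s,z^{*}]\subseteq st$ and $P[s,x]=Q[s,x]$, then if $x$ were strictly after $z^{*}$ the vertex immediately following $z^{*}$ on $P$ would lie on $Q[s,x]\subseteq Q$, contradicting the definition of $z^{*}$ outright. So the case split on $z^{**}$ can be dropped.
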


  \begin{proof}
        Let us assume that $x \notin st$. We will now show that $x$ is $s$-clean. Since $sx$ is intact from failures $\{e_1,e_2\}$, we only need to show that endpoints of $e_1$ and $e_2$, i.e., $a,b,c,d$ do not lie in $T_s(x) $.

        Since $x$ is not in the $st$ path and hence not in the $sa$ path,  the two subtrees $T_s(x)$ and $T_s(a)$ are disjoint. Hence, $a,b \notin T_s(x)$.

    Now, if $c \in T_s(x)$, then the path $sc$ is intact from failures and is a subpath of the secondary path (as we assumed that the secondary path passes through $e_2$ and hence $c$). Also, $x$ is in the path $P$. So, in this case, the detour will start in the subpath $xe_2$  on the secondary path. This is a contradiction.

If $d \in T_s(x)$, then since $c \notin T_s(x)$, there exists some $c'$ such that $c'd$ is an edge of $T_s(x)$. Consider the path from $s$ to $x$ concatenated with the path from $x$ to $d$ in $T_s(x)$ (See \Cref{fig:dnotinTsx}). This concatenated path represents the shortest path from $s$ to $d$ and avoids $e_1$. Additionally, it should be noted that $T_s$ was constructed by adding small weights to each edge. Therefore,  $|sd \DIA  e_1| = |sd|$ is unique and does not pass through $e_2$. Furthermore, it should be remembered that we found $st \DIA  e_1$ in the same weighted graph, and thus $st \DIA e_1$ is also unique and passes through $e_2$ (we assumed at the start of \Cref{sec:twofault} that $e_2 \in st \diamond e_1$). However, according to the property of shortest paths, $sd \DIA e_1$ should be a subpath of $st \DIA e_1$. This leads to a contradiction, implying that $d \notin T_s(x)$.   \end{proof}

 \begin{figure}
        \centering
                \begin{tikzpicture}[scale=0.27]
                \coordinate (t) at (-5,-5);
\coordinate (s) at (-5,11);
\coordinate (u) at (-4.93,-1);
\coordinate (v) at (-4.93,-4);

\draw[ line width=0.7mm,black,opacity=0.5] (t) node[below][black] () {$t$} -- (s) node[above][black] () {$s$};


\draw (-5,2) node [left][scale=0.8] {$e_1$};
\draw (-3.8,-0.68) node [above] [scale=0.8]{$e_2$};

\draw  (u) arc(-90:90:2.5) [ line width=0.7mm,blue,opacity=0.5];
\draw  (v) arc(-90:90:4.5) [ line width=0.7mm, teal ,opacity=1];
\filldraw [black] (-5,6) circle(4pt);
\draw (-5,6) node [right][scale=0.8] {$x$};
\filldraw [black] (-5,8) circle(4pt);
\draw (-5,8) node [right][scale=0.8] {$y \in \LL_l$};

\draw[ line width=0.7mm,red,opacity=1] (-5,1.5) node[right][black][scale=0.8] () {$b$} -- (-5,2.5) node[right][black][scale=0.8] () {$a$};
\filldraw [red] (-5,2.5) circle(4pt);
\filldraw [red] (-5,1.5) circle(4pt);

\draw[ line width=0.7mm,red,opacity=1] (-4,-0.85) node[below][black][scale=0.8] () {$d$} .. controls (-3.5,-.6) .. (-2.91,0) node[right][black][scale=0.8] () {$c$};
\filldraw [red] (-4,-0.85) circle(4pt);
\filldraw [red] (-2.91,0) circle(4pt);

\draw [decorate,
    decoration = {calligraphic brace}][ultra thick] (-5.4,6.2) --  (-5.4,10.9);
\node[draw][scale=0.8] at (-8,8.5) {$\geq 2^l$};
    
  \draw [decorate,
    decoration = {calligraphic brace}][ultra thick] (-5.4,2.5) --  (-5.4,5.8);
 \node[draw][scale=0.8] at (-8,4.2) {$O(2^l)$};
 \filldraw [black] (s) circle (4pt);
\filldraw [black] (t) circle (4pt);

                \end{tikzpicture}
                \caption{Finding a $D$-close vertex}
        \end{figure}
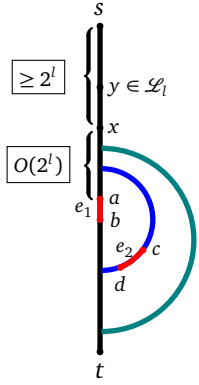
   If $x$ is not an $s$-clean vertex, then using \Cref{claim:xinst}, $x$ lies on $st$ path. We will now show how to find a  $D$-close vertex from $s$ using $x$. Let us assume that $\EL = \lfloor\log |se_1|\rfloor$. By the first condition of the maximiser (\ref{eqn:3}), $|sx| \ge 2^{\EL}$. Also, since $|se_1| = O(2^{\EL})$,   $|xe_1| = O(2^{\EL})$.  We now find the first vertex, say $y$, from $x$ on the $xs$ path that lies in $\LL_{\EL}$. Using \Cref{lem:highprobability}, such a vertex exists with a high probability within a distance $2^{\EL}$ from $x$. By construction, the distance $|ye_1|$ on the primary path is also $ O(2^{\EL})$. This implies that $y$ is {\em close} to $e_1$. We now show that $y$ is $D$-close  to $e_1$ from $s$.

To this end, remember that $sx$ is intact from failures and $x$ lies on $P$ as well as $st$ path. Thus, by construction, even $sy$ is intact from failures and $y$ lies in $P$  as well as $st$ path. This implies that the detour of $P$ starts on the primary path $st$ after the vertex $y$ or on the subpath $ye_1$ of $st$. Thus, $y$ is $D$-close to $e_1$ from $s$.

\subsection{The detour starts on the secondary path}
\label{sec:startsecondary}
We aim to find a $D$-close vertex to the edge $e_2$. The reader will see that this is a challenging case. In this case, we will use a different maximiser. We set $dist[0] = O(2^{\lfloor\log \min\{|se_1|,|se_2 \DIA e_1|\}\rfloor})$.  Note that $se_2 \DIA e_1$ will not pass through $e_2$. Thus, the single fault data structure of \cite{Demetrescu2008} can find it in $O(1)$ time. The conditions in the maximiser are as follows:

\begin{align*}
\DD_{0\alpha}(s,t,dist,clean) =  \arg\max_{F \in E^2}\big\{& dist[0]  \text{ distance from $s$ on the primary path and \textbf{secondary} path is intact} \\ & \text{ from $F$ , other conditions based on $\alpha$ from $t$ side.} \big\} \numberthis \label{eqn:4}
\end{align*}

 Some remarks are in order. Specifically, we want the reader to note the use of primary and secondary paths in the maximiser. Let $(e,e')$ be the pair of edges that satisfy the common condition of \Cref{eq:common}. For this pair, the primary path is $st$  (it always remains the same), and the secondary path is $st \DIA e$ (the secondary path changes for each pair). We demand that $dist[0]$ from both the primary and the secondary paths are intact from failures. The reader can check that the pair $e_1,e_2$ satisfies the conditions of the maximiser.

Let  $x$ be the endpoint of an edge returned by the maximiser such that it lies in the first segment of $P$. Let us assume that $x$ is not $s$-clean. We will now show how to find a $D$-close vertex using $x$.
There are two cases here. Let us look at both of them, starting with the easier case.

(1) $|se_2 \DIA e_1| \le |se_1|$ : This case is similar to the one we discussed in the previous section. Let $\EL =\lfloor\log |se_2 \DIA e_1|\rfloor$. By the conditions of the maximiser, $|sx| \ge 2^{\EL}$. As in the previous section, we will find a vertex $y$ in $xs$ path that lies in $\LL_{\EL}$. Again, the reader can verify that $y$ is a $D$-close vertex to $e_2$ from $s$.

(2) $|se_1| < |se_2 \DIA e_1|$ :This is the most complex case for us. Unlike the previous case, we cannot find a $D$-close vertex using one invocation of the maximiser. The reader will see that we will use different maximisers two more times. We will show that vertices returned by these maximisers will slowly lead us to the desired $D$-close vertex.

\begin{enumerate}
\item  The first maximiser (stated above) will give us a vertex $y$, which is {\bf close} to $e_1$. Note that we want to find a $D$-close vertex to $e_2$.

\item The second maximiser will give us a vertex $p \in P$ that has one important property: $a$ does not lie in the trapezoid of $P[p,t]$. This is the only place in our paper where we use the concept of the trapezoid.

\item Once we have $p$, we can find the $D$-close vertex to $e_2$ in one call to an appropriate maximiser. This $D$-close vertex, say $y$, also satisfies the above property. That is, $a$ does not lie in the trapezoid of $P[y,t]$.
\end{enumerate}
Before we move further, we describe another issue. As stated above, in this case, we will find two vertices, $y$ and $p$. After we found, say, $y$, it may be the case that the primary path $yt$ passes through $e_2$ and the secondary path passes through $e_1$. This is actually a good case for us as we recurse our operations in the  \Cref{sec:startprimary} to find a $D$-close vertex to $e_2$ from $y$. In the rest of the  section, we will assume the following:

\begin{assumption}
\label{as:primarycontainse1}
All primary paths contain $e_1$, and the secondary paths contain $e_2$.
\end{assumption}

Together with this assumption, we also show that, while recursing on $P$ having any vertex as the source, if $e_1 = (a,b)$ lies on the primary path, then $a$ is closer to the source. Similarly, if $e_2 = (c,d)$ lies on secondary path, then $c$ is closer to the source. Remember that we had assumed this for the source $s$ at the start of this section. This section will show that this assumption remains valid for all primary and secondary paths.

\begin{lemma}
        Let $e_1=(a,b)$ be the edge on the primary path $st$ where $a$ is closer to $s$. Similarly, let $e_2 = (c,d)$ lies on $st \DIA e_1$ and $c$ is close to $s$ in $st \DIA e_1$. Also, let $p$ be a vertex on path $P[s,t]$. If $e_1$ lies on the primary path $pt$, then $a$ is closer to $p$. Similarly, if $e_2$ lies on $pt \DIA e_1$, then $c$ is closer to $p$ on $st \DIA e_1$.
\end{lemma}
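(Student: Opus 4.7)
The plan is to exploit the uniqueness of shortest paths (guaranteed by the small-weight perturbation in $G'$, which survives the removal of any edge) to reinterpret ``$a$ is closer to $s$'' as a structural statement about the shortest path tree rooted at $t$.

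First I would consider $T_t$, the unique shortest path tree rooted at $t$ in $G'$. Since $st$ contains $e_1=(a,b)$ with $a$ closer to $s$, walking from $s$ up to the root $t$ encounters $a$ before $b$. Hence $e_1$ is an edge of $T_t$, and $b$ is the parent of $a$. Consequently, the set of vertices whose shortest path to $t$ uses $e_1$ is precisely the subtree $T_t(a)$ hanging below $a$, and every such shortest path traverses $e_1$ in the orientation $a\to b$.

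Now let $p$ be any vertex on $P[s,t]$ and suppose the primary path $pt$ (the unique shortest path from $p$ to $t$ in $G'$) contains $e_1$. By uniqueness, $pt$ is the tree path from $p$ to $t$ in $T_t$, so $p\in T_t(a)$ and the path reads $p\to\cdots\to a\to b\to\cdots\to t$. Therefore $a$ is closer to $p$ than $b$, which is exactly what we wanted for the primary path.

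For the secondary path, I would repeat the argument in the graph $G'\setminus e_1$. By the same uniqueness of shortest paths (used by the single-fault oracle of Demetrescu et al.\ to answer $st\DIA e_1$), there is a well-defined shortest path tree $T_t'$ rooted at $t$ in $G'\setminus e_1$. Since $e_2=(c,d)$ lies on $st\DIA e_1$ with $c$ closer to $s$, $e_2$ is an edge of $T_t'$ and $d$ is the parent of $c$. If $e_2$ lies on the path $pt\DIA e_1$, then by uniqueness this path is the $T_t'$-path from $p$ to $t$, forcing $p\in T_t'(c)$ and making the traversal $p\to\cdots\to c\to d\to\cdots\to t$, so $c$ is closer to $p$, as required. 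The only mild subtlety worth writing out carefully is confirming that the weight perturbation preserves uniqueness of shortest paths in $G'\setminus e_1$ (which it does, since removing an edge cannot create ties where none existed), but no other obstacle arises.
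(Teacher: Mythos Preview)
Your proof is correct. Both your argument and the paper's rest on uniqueness of shortest paths in $G'$ (and in $G'\setminus e_1$), but you package this differently: you reason structurally via the shortest path tree $T_t$, observing that $e_1$ is a tree edge with $b$ the parent of $a$, so any vertex $p$ whose $pt$ path uses $e_1$ must sit in the subtree $T_t(a)$ and therefore meets $a$ before $b$. The paper instead argues by direct contradiction: if $b$ were closer to $p$ on $pt$, then splicing the $bt$ subpath of $st$ (which avoids $e_1$) onto the $pb$ prefix of $pt$ would yield a shortest $pt$ path not containing $e_1$, contradicting the hypothesis. Your tree-based view is arguably cleaner and makes the ``orientation'' of $e_1$ a global property of $T_t$ rather than something checked per vertex $p$; the paper's argument is shorter and avoids naming $T_t$ explicitly. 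Neither proof actually uses that $p$ lies on $P[s,t]$, so both establish a slightly stronger statement than is claimed.
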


\begin{proof}
         Since $a$ is closer to $s$ in $st$ path,  $b$ is closer to $t$ than $a$. If $b$ is closer to $p$ on the $pt$ path, then we can directly use the $bt$ subpath of the $st$ path and get the shortest $pt$ primary path without using the edge $e_1$. This violates the condition of the claim.
        Hence, $a$ is closer to $p$.

  The proof for the second part of the claim is similar.
\end{proof}

With this, we have addressed all the small issues. Let us now look at our three maximisers.

\textbf{First maximiser: }Let us first see how to find a close vertex to $e_1$ (See \Cref{fig:sect52}). Let $x$ be a vertex returned by the maximiser such that $x$ lies on the first segment of  $P$.

Let  $\EL =\lfloor\log |se_1|\rfloor$. By the first condition of the maximiser,
$|sx| \ge 2^{\EL}$. We now find the first vertex of $ \LL_{\EL}$ in $sx$ path from $s$ whose distance is at most  $2^{\EL}$. Using \Cref{lem:highprobability}, such a vertex, say $y \in \LL_{\EL}$, exists with a high probability. By \Cref{as:primarycontainse1}, the primary path $yt$ passes through $e_1$. We will now show that $|ye_1| = O(2^{\EL})$. To this end, we  will  find a path from $y$ to $e_1$ that avoids $\{e_1,e_2\}$ and has length $ O(2^{\EL})$. Consider the subpath $ys$ of $sx$ concatenated with $se_1$. Since $sx$ is intact from failures $\{e_1,e_2\}$, so is $sy$. And by construction,  $|sy| = O(2^{\EL})$. Similarly, $se_1$ avoids $\{e_1,e_2\}$ and is also of length $O(2^{\EL})$.  Thus,  $|ye_1| = O(2^{\EL})$. By definition, $y$ is close to $e_1$. 

\textbf{Second maximiser:}
We will now show how to use $y$ to find the vertex $p$. To this end, we define another maximiser.
Let $dist[0] = |ye_1|$. The conditions of the maximiser are as follows:

\begin{align*}
\DD_{0\alpha}(y,t,dist,clean) =  \arg\max_{F \in E^2}\big\{& dist[0]  \text{ distance from $s$ on the primary path and \textbf{secondary} path is intact }\\ & \text{from failures $F$  and other conditions based on $\alpha$ from the $t$ side.}\big\}
 \numberthis \label{eqn:5}
\end{align*}
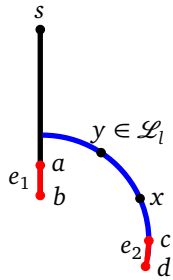
\begin{figure}[hpt!]
                \centering
                \begin{tikzpicture}[scale=0.4]

                \coordinate (t) at (-5,-7);
\coordinate (s) at (-5,7);
\coordinate (u) at (-4.93,-1);
\coordinate (v) at (-4.93,-4);

\draw[ line width=0.7mm,black,opacity=0.5] (-5,1.5) node[below][black] () {} -- (s) node[above][black] () {$s$};

\draw (-5,2) node [left] {$e_1$};
\draw (-2,-0.9) node [above] {$e_2$};

\draw  (-4.93,3.5) arc(90:-10:3.5) [ line width=0.7mm,blue,opacity=0.5];
\filldraw [black] (-1.72,1.4) circle (4pt);
\draw (-1.72,1.4) node [right] {$x$};
\draw[ line width=0.7mm,red,opacity=1] (-5,1.5) node[right][black] () {$b$} -- (-5,2.5) node[right][black] () {$a$};
\filldraw [red] (-5,2.5) circle(4pt);
\filldraw [red] (-5,1.5) circle(4pt);

\draw[ line width=0.7mm,red,opacity=1] (-1.54,-0.85) node[right][black] () {$d$} .. controls (-1.48,-.65) .. (-1.43,0) node[right][black] () {$c$};
\filldraw [red] (-1.54,-0.85) circle(4pt);
\filldraw [red] (-1.43,0) circle(4pt);
\filldraw [black] (-3,2.92) circle (4pt);
\draw[scale=0.3] (-7,9.7) node [above] {$y \in \LL_l$};
\filldraw [black] (s) circle (4pt);

                \end{tikzpicture}
                \caption{$y$ is close to $e_1$}
                \label{fig:sect52}

        \end{figure}
Let $p$ be the endpoint of an edge returned by the maximiser such that $p$ lies on the first segment of $P[y,t]$.
Before we move further, let us take a slight detour in our discussion. Till now, we have seen that if a vertex returned by the maximiser does not lie on the first segment, then it may be used as a clean vertex (for example, maximiser \ref{eqn:3} and \ref{eqn:4}). However, we will not use $p$ as a clean vertex, even if it satisfies the definition of the $s$-clean vertex. This is done to conserve space. Since this is an important technical point, we add this as a remark which will be used crucially when we calculate the space taken by the maximiser.

\begin{remark}
\label{rem:notusecondmaximiser}
Any vertex $p$ returned as the output of the maximiser (\ref{eqn:5}) will not be used as a clean vertex in the subsequent call to another maximiser.
\end{remark}
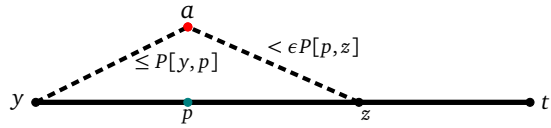
\begin{figure}[hpt!]

\centering
\begin{tikzpicture}[scale=0.5]
\coordinate (y) at (-6,8);
\coordinate (t) at (7,8);
\coordinate (a) at (-2,10);
\coordinate (z) at (2.5,8);

\draw[ line width=0.7mm,black,opacity=0.5] (y) node[left][black][scale=0.9] () {$y$} -- (t) node[right][black][scale=0.9] () {$t$};
\filldraw [red] (-2,10) circle (3pt);
\draw (-2,10) node[above] {$a$};

\draw[densely dashed, ultra thick] (y) -- (a);
\filldraw [black] (z) circle (3pt);
\draw[densely dashed, ultra thick] (z) -- (a);
\filldraw [red] (-2,10) circle (3pt);
\draw (2.7,8) node[below][scale=0.8] {$z$};
\draw (-2,8) node[below][scale=0.8] {$p$}; 
\draw (-2.3,8.6) node[above][scale=0.8] {$\leq P[y,p]$}; 
\draw (1.3,9) node[above][scale=0.8] {$< \epsilon P[p,z]$};



\filldraw [black] (y) circle (3pt);
\filldraw [teal] (-2,8) circle (3pt);
\filldraw [black] (t) circle (3pt);

\end{tikzpicture}
\caption{$a$ does not lie in the trapezoid of $P[p,t]$ }
\label{fig:lemma51}
\end{figure}
The above remark implies that once we have found $p$ on the first segment of path $P$, we will return $|sp| + \QU(p,t,\{e_1,e_2\},2)$. Thus, we will recurse our algorithm with $p$ as the source. But $p$ has one special property that $s$ did not have: the trapezoid of the path $P[p,t]$ ( that is $pt \DIA \{e_1,e_2\}$),  does not contain $a$ --
we now prove this claim.

\begin{lemma}
\label{claim:trap}
        Let $p$ be a vertex returned by maximiser (\ref{eqn:5}) such that $p$ lies on the first segment of $P[y,t]$. Then, $a$ does not lie in the trapezoid of the path $P[p,t]$. Or by definition of trapezoid, for all $z \in P[p,t]$, $|za \diamond \{e_1,e_2\}| > \epsilon \min\{|P[p,z]|,|P[z,t]|\}.$
\end{lemma}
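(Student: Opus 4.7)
The plan is to argue by contradiction. Suppose there exists $z \in P[p,t]$ with $|za \DIA F| \le \epsilon \min\{|P[p,z]|,|P[z,t]|\}$, so $a$ lies in the trapezoid. The goal is to exhibit a $y$-to-$t$ walk avoiding $F$ that is strictly shorter than $|P[y,t]|$, contradicting the fact that $P[y,t] = yt \DIA F$ is the shortest $y$-$t$ path avoiding $F$.

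The candidate walk is $y \to a \to z \to t$: a shortest $ya$ path avoiding $F$, followed by a shortest $az$ path avoiding $F$, followed by the subpath $P[z,t]$. For the first leg, observe that $(e_1,e_2)$ satisfies the conditions of maximiser (\ref{eqn:5}), so the first $|ye_1|$ distance from $y$ on the primary path $yt$ is intact from $\{e_1,e_2\}$. Since $e_1=(a,b)$ lies on the primary $yt$ path with $a$ the endpoint closer to $y$, the natural $ya$-subpath of the primary path has length exactly $|ye_1|$ and avoids $F$; hence $|ya \DIA F| = |ye_1|$. Using the trapezoid assumption $|az \DIA F| \le \epsilon |P[p,z]|$, the walk has total length at most $|ye_1| + \epsilon |P[p,z]| + |P[z,t]|$. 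Comparing to $|P[y,t]| = |P[y,p]| + |P[p,z]| + |P[z,t]|$, the desired contradiction reduces to
\[
|ye_1| < |P[y,p]| + (1-\epsilon)|P[p,z]|.
\]

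The key ingredient is a lower bound $|P[y,p]| \ge |ye_1|$. Since $p$ lies on the first segment of $P[y,t]$, the subpath $P[y,p]$ is a shortest $yp$ path in $G$, so $|P[y,p]| = |yp|$. Because $p$ is an endpoint of an edge returned by maximiser (\ref{eqn:5}), the maximiser's own condition forbids $p$ from lying strictly within the first $|ye_1|$ distance of $y$ along either the primary path $yt$ or the secondary path $yt \DIA e_1$. From this, I plan to conclude $|yp| \ge |ye_1|$, which suffices.

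The main obstacle is handling the case where $p$ lies neither on the primary nor on the secondary path, so that the maximiser's intactness condition does not bound $|yp|$ directly. The plan for this case is to argue that any shortest $yp$ path in $G$ of length less than $|ye_1|$ would have to leave both the primary and secondary regions near $y$, and to derive a contradiction by showing that such a short $yp$ route together with $P[p,t]$ would yield a $y$-$t$ path avoiding $F$ strictly shorter than $|P[y,t]|$. The other, and purely bookkeeping, subtlety will be to invoke the $|P[z,t]|$-side of the $\min$ in the trapezoid definition if the $|P[p,z]|$-side is too weak — specifically, by replacing the final leg $P[z,t]$ with the shortest $zt$ path avoiding $F$ — but I expect the $|P[p,z]|$-side to be the operative bound for the main contradiction.
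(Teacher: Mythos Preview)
Your overall strategy matches the paper's: assume $a$ lies in the trapezoid, take the witnessing $z$, and show that the walk $y \to a \to z$ avoiding $F$ is strictly shorter than $P[y,z]$ (equivalently, your $y \to a \to z \to t$ versus $P[y,t]$). The crux is exactly the inequality $|P[y,p]| \ge |ye_1| = |ya|$, which the paper also uses (stated there as ``$|yp| \ge |ya|$'', combined with $|P[y,p]| \ge |yp|$).

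Your handling of this key inequality goes astray, however. First, the maximiser's secondary-path condition is about $yt \DIA e_1^*$ --- the secondary path for the \emph{maximising} pair $(e_1^*,e_2^*)$ --- not about $yt \DIA e_1$; see the remark following maximiser~(\ref{eqn:4}). By the common condition~(\ref{eq:common}), every endpoint of the returned edges lies on $yt$ or on $yt \DIA e_1^*$, so your ``main obstacle'' (that $p$ might lie on neither) cannot occur. Second, your proposed workaround for that obstacle is circular: if $|yp| < |ye_1|$, then concatenating the shortest $yp$ path with $P[p,t]$ gives a walk of length $|yp| + |P[p,t]| = |P[y,p]| + |P[p,t]| = |P[y,t]|$ (since $p$ is on the first segment, $|yp| = |P[y,p]|$), which is not strictly shorter and yields no contradiction. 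The direct argument is a short case split. If $p \in V(e_1^*)$ then $p \in yt$ and the primary intactness condition gives $|yp| \ge |ye_1|$ outright. If $p \in V(e_2^*)$ then $p \in yt \DIA e_1^*$ and the secondary intactness condition gives $|yp \DIA e_1^*| \ge |ye_1|$; now either $e_1^* \notin yp$, so $|yp| = |yp \DIA e_1^*| \ge |ye_1|$, or $e_1^* \in yp$, in which case $|yp|$ already exceeds the distance from $y$ to $e_1^*$ along $yt$, and that distance is $\ge |ye_1|$ by the primary condition. Finally, the $|P[z,t]|$-side of the $\min$ is never needed; the $|P[p,z]|$-side suffices, exactly as in the paper.
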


\begin{proof}
        For contradiction, let $a$ lie in the trapezoid of the subpath $P[p,t]$. Then, there exists a vertex $z$ on $P[p,t]$ such that $|za \DIA \{e_1,e_2\}| \leq \epsilon \min\{|P[p,z]|,|P[z,t]|\} \le  \epsilon |P[p,z]|$.
        Now we will find a path from $y$ to $z$ that avoids all the faults and is less than $P[y,z]$. Consider the path $ya$ concatenated with $az \DIA \{e_1,e_2\}$ (See \Cref{fig:lemma51}). By \Cref{as:primarycontainse1}, $ya$ does not contain any faults. Thus, the above path avoids both $e_1$ and $e_2$.

Let us now calculate the length of this path.
\begin{tabbing}
$|ya| + |az \DIA \{e_1,e_2\}|$ \= = $|ya| + |za \DIA \{e_1,e_2\}|$ \\
    \>  $ \le |yp| + |za \DIA$\=$ \{e_1,e_2\}|$ \hspace{0.25 cm} (By definition of the maximiser, $|yp| \geq |ya|$)\\
 \> $\le |P[y,p]| + |za \DIA$\=$ \{e_1,e_2\}|$ \hspace{0.8 cm}
(as $P[y,p] \ge |yp|)$\\
 \> $\leq |P[y,p]| + \EP |P[p,z]|$ $< |P[y,p]| + |P[p,z]|$ $ = |P[y,z]|$
\end{tabbing}

This is a contradiction as $P[y,z]$ is the shortest path from $y$ to $z$ avoiding both faults.
        So, our assumption was wrong. So, for any $z \in P[p,t]$, $|za \DIA \{e_1,e_2\}| > \epsilon |P[p,z]|$. In other words, $a$ lies outside trapezoid of $P[p,z]$.
\end{proof}

We will now use the vertex $p$ to find a $D$-close vertex to $e_2$. To this end, we need another useful claim to design the maximiser.
\begin{lemma}
\label{claim:LCA}
        Let $z$ be the least common ancestor of $e_1$ and $e_2$ in the path $pe_1$ and $pe_2 \DIA e_1$. Let $i$ be the greatest integer such that $(1+\EP)^i$ is less than $|pa|$ on the primary path $pt$. Then the length of the path $|pz|$ on the primary path is $\le (1+\epsilon)^{i}$. See \Cref{fig:lca} for an illustration of the lemma.
\end{lemma}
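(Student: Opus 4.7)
The plan is to apply the trapezoid non‑containment property of Lemma~\ref{claim:trap} at the vertex $z$ itself, after first checking that $z$ actually lies on the path $P[p,t]$. Since the primary path $pt$ and the secondary path $pt \DIA e_1$ share a common prefix that diverges exactly at the LCA $z$, and since in Section~\ref{sec:startsecondary} we are in the case where the detour of $P$ starts on the secondary path (hence at a vertex at or past $z$), the path $P[p,t] = pt \DIA \{e_1,e_2\}$ begins by following this common prefix. Consequently $z \in P[p,t]$, and the subpath $P[p,z]$ coincides with the primary‑path segment from $p$ to $z$, giving $|P[p,z]| = |pz|$.

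Next, I would upper‑bound $|za \DIA \{e_1,e_2\}|$ by simply walking from $z$ to $a$ along the primary path $pt$. This sub‑segment avoids $e_1=(a,b)$ (it terminates at $a$ without crossing the edge) and avoids $e_2$ (since in Case~2 with Assumption~\ref{as:primarycontainse1}, $e_2$ lies only on the secondary path). Hence
\[
|za \DIA \{e_1,e_2\}| \le |pa| - |pz|.
\]
Lemma~\ref{claim:trap} applied at $z \in P[p,t]$ yields
\[
|pa| - |pz| \ge |za \DIA \{e_1,e_2\}| > \EP \min\{|P[p,z]|,\,|P[z,t]|\}.
\]
In the regime where the lemma is invoked one has $|pz| \le |pa|$, and $|pa|$ is the relatively short distance from the subproblem source $p$ to the first fault, whereas $P[z,t]$ contains the rest of $P$ including the detour all the way to $t$; thus $|pz|\le|P[z,t]|$ and the minimum is realised by $|pz|$. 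Plugging this in gives $|pa| - |pz| > \EP\,|pz|$, i.e.\ $|pz| < |pa|/(1+\EP)$. By the choice of $i$ we have $|pa| \le (1+\EP)^{i+1}$, so $|pz| < (1+\EP)^i$, as required.

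The step I expect to be the main obstacle is the first one, namely justifying that $z$ genuinely sits on $P[p,t]$ rather than only on the union $pt \cup (pt \DIA e_1)$. This requires using the location of the detour's start (Section~\ref{sec:startsecondary}) together with the fact that $P[p,t]$ is the shortest $p$‑to‑$t$ path avoiding both faults, so it is forced to follow the unique common prefix of the primary and secondary paths up to at least $z$. A secondary subtlety is ruling out the case where $|P[z,t]|$ realises the minimum in the trapezoid bound; this is resolved by the observation above that $|pz| \le |pa|$ is small compared with the remaining portion of $P$ that must cross the detour and reach $t$.
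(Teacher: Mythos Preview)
Your proposal is correct and is essentially the paper's argument recast directly rather than by contradiction: the paper assumes $|pz| > (1+\epsilon)^i$, bounds $|za \DIA \{e_1,e_2\}| = |pa| - |pz| \le (1+\epsilon)^{i+1} - (1+\epsilon)^i = \epsilon(1+\epsilon)^i \le \epsilon|pz| = \epsilon|P[p,z]|$, and invokes Lemma~\ref{claim:trap} for the contradiction. One simplification: your worry about which term realises the $\min$ is unnecessary, because the \emph{proof} of Lemma~\ref{claim:trap} in fact concludes with the stronger inequality $|za \DIA \{e_1,e_2\}| > \epsilon\,|P[p,z]|$ for every $z \in P[p,t]$ (see its last sentence), so you can cite that directly and drop the informal comparison of $|pz|$ with $|P[z,t]|$.
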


\begin{proof}
        Assume for contradiction that $|pz| > (1+\epsilon)^{i}$.  Then,

\begin{tabbing}
$|za \DIA \{e_1,e_2\}|$ \=   $ = |za|$ $=|pa|-|pz|$ $ \leq (1+\epsilon)^{(i+1)}-(1+\epsilon)^{i}$ $=\epsilon (1+\epsilon)^{i}$ $ \leq \epsilon |pz|$ $ = \EP|P[p,z]|$
\end{tabbing}

   This violates  \Cref{claim:trap}. Hence, our assumption was wrong.
\end{proof}

\textbf{Last maximiser:} We are now ready to define our last maximiser. First we will find the largest $i$ such that $(1+\EP)^i \le |pe_1| $. Also, we set $dist[0] = 2^{\lfloor\log |pe_2 \DIA e_1|\rfloor}$. We will use the following conditions in the maximiser:

\begin{align*}
& \DD_{0\alpha}(p,t,dist,clean) =  \arg\max_{F \in E^2}\{   \text{ $(1+\EP)^i$ distance from $p$ on the primary path is intact from $F$ ; }\\ & \text{ \hspace{3 cm} $dist[0]$ distance from $p$ on the secondary path is intact from $F$; Other } \\  & \text{ \hspace{3 cm} conditions based on $\alpha$ from $t$ side.}\}
 \numberthis \label{eqn:6}
\end{align*}

\begin{figure}[hpt!]
\centering
\begin{tikzpicture}[scale=0.6]
\coordinate (p) at (-4,8);
\coordinate (a) at (-7,4);
\coordinate (c) at (-3,4);
\coordinate (d) at (-2.5,3.3);
\coordinate (b) at (-7,3);
\coordinate (z) at (-5.8,7);

\draw[ line width=0.7mm,black,opacity=1] (p) ..node[above, rotate=70, scale=0.7] {\(pe_1\)} controls (-8,6) and (-7,6) .. (a);
\draw[ line width=0.7mm,black,opacity=1] (p)  ..node[pos=0.8,above, rotate=-30, scale=0.7] {\(pe_2 \DIA e_1\)} controls (-8,6) and (-6,6) .. (c);
\filldraw [black] (p) circle (3pt);
\filldraw [black] (z) circle (3.5pt);
\filldraw [red] (a) circle (3pt);
\draw[ line width=0.7mm,black,opacity=1] (-7.2,4.6) -- (-6.85,4.6);

\filldraw [red] (c) circle (3pt);
\filldraw [red] (d) circle (3pt);
\filldraw [red] (b) circle (3pt);

\draw[ line width=0.7mm,red,opacity=1] (c) -- (d);
\draw[ line width=0.7mm,red,opacity=1] (a) -- (b);
\filldraw [red] (c) circle (3pt);

\draw (d) node[right][scale=0.8] {$d$};
\draw (c) node[right][scale=0.8] {$c$};
\draw (p) node[above][scale=0.8] {$p$};
\draw (a) node[left][scale=0.8] {$a$};
\draw (b) node[left][scale=0.8] {$b$};
\draw (-5.8,7.1) node[above][scale=0.8] {$z$};
\draw (-7.2,4.6) node[left][scale=0.8] {$(1+\epsilon)^i$};

\end{tikzpicture}
\caption{Illustration  for  \Cref{claim:LCA}}
\label{fig:lca}
\end{figure}
Let $x$ be an endpoint of an edge returned by the maximiser such that it lies in the first segment of $P[p,t]$.
 We first show that $x$ cannot lie on the primary path $pt$ and should necessarily lie on the secondary path.

\begin{lemma}
\label{lem:notinpz}
Let $x$ be an endpoint of an edge returned by the maximiser (\ref{eqn:6}) such that $x$ lies in the first segment of $P[p,t]$.   Let $z$ be the least common ancestor of $e_1$ and $e_2$ in the path $pe_1$ and $pe_2 \DIA e_1$. Then, $x$ cannot lie in $pz$, but it lies on $ze_2\DIA e_1$.
\end{lemma}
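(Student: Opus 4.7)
The plan is to exploit the two intactness conditions imposed by the maximiser (\ref{eqn:6})---namely that both the initial $(1+\EP)^i$-length prefix of the primary path from $p$ and the initial $dist[0]$-length prefix of the secondary path from $p$ are intact from $F^* = \{e_1^*, e_2^*\}$---together with the bound $|pz| \le (1+\EP)^i$ supplied by \Cref{claim:LCA}. Each half of the lemma then follows from a short contradiction argument.

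For the claim $x \notin pz$, I would suppose for contradiction that $x$ lies strictly between $p$ and $z$ on the primary path from $p$. Because $|pz| \le (1+\EP)^i$ by \Cref{claim:LCA}, such an $x$ would lie strictly inside the initial $(1+\EP)^i$-prefix of the primary path, which the maximiser condition requires to be intact from $F^*$. Intactness forbids any endpoint of $e_1^*$ or $e_2^*$ from being strictly interior to this prefix, which contradicts that $x$ is one such endpoint.

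For the claim $x \in ze_2 \DIA e_1$, I would use the structure of $P[p,t]$: since the detour of $P$ starts on the secondary path, the initial portion of $P[p,t]$ traces the secondary $pt \DIA e_1$---first along the shared prefix $pz$ and then along $ze_2 \DIA e_1$---deviating somewhere strictly before $e_2$. The first segment of $P[p,t]$, being a shortest-path prefix of $P$ in its $2$-decomposition, is contained within this pre-deviation portion $pz \cup ze_2 \DIA e_1$. Hence $x$, lying on the first segment, must sit in $pz \cup ze_2 \DIA e_1$; combined with the first half of the proof this yields $x \in ze_2 \DIA e_1$.

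The step I expect to require the most care is the implicit claim that the first segment of $P[p,t]$ does not extend past the deviation vertex into the detour portion. I would handle this by re-invoking that the first segment is a shortest path in $G$: any extension beyond the deviation vertex would, in view of the small shared prefix length $|pz| \le (1+\EP)^i$ and the secondary intactness condition on $dist[0]$, admit a strictly shorter alternative in $G$ (routing through the short $pz$ and re-joining $P$), contradicting the minimality of the first segment.
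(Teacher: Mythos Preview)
Your argument for the first half (that $x\notin pz$) is exactly the paper's: invoke \Cref{claim:LCA} to get $|pz|\le(1+\EP)^i$, then use the primary-path intactness clause of maximiser~(\ref{eqn:6}) to bar any endpoint of $e_1^*,e_2^*$ from the $(1+\EP)^i$-prefix, hence from $pz$. For the second half the paper is even terser than you are; it simply writes ``Since $x\in P$ and the detour starts on the secondary path, $x$ lies on $ze_2\DIA e_1$'', which is precisely your main line: $P[p,t]$ initially traces $pz\cup ze_2\DIA e_1$, so any first-segment vertex not in $pz$ must be in $ze_2\DIA e_1$.

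The one place your write-up diverges is the last paragraph, where you flag the worry that the first segment might extend past the deviation vertex $v$ and then try to dispose of it. The paper does not address this point at all, so you are not missing anything relative to the paper; but your proposed fix does not work as stated. You suggest that an extension past $v$ would admit a strictly shorter alternative in $G$ by ``routing through the short $pz$ and re-joining $P$''. However, the first segment already traverses $pz$ (and then $zv$) on its way to $v$, so rerouting through $pz$ buys nothing, and the secondary intactness bound on $dist[0]$ constrains where the maximiser endpoints sit, not where the first segment ends. If you want to close this gap cleanly, a more promising route is to use that vertices strictly in the detour are, by definition, not in $pt\cup pt\DIA e_1$, so any $x$ that sits on the primary or the secondary (as the endpoints of $e_1^*,e_2^*$ must, by the common condition~(\ref{eq:common})) and also lies on $P$ cannot be in the detour portion; combined with $x\notin pz$ this forces $x\in ze_2\DIA e_1$.
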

\begin{proof}
Let $i$ be the greatest integer such that $(1+\EP)^i$ is less than $|pa|$ on the primary path $pt$. Using \Cref{claim:LCA}, we know that $|pz| \le (1+\EP)^i$. One of the conditions of the maximiser is that $(1+\EP)^i$ distance from $p$ on the primary path remains intact. Thus, $x$ cannot lie on  $sz$. Since $x \in P$ and the detour starts on the secondary path,  $x$  lies on $ze_2 \DIA e_1$.
\end{proof}

 Let $\{e_1^*,e_2^*\}$ be the edges the maximiser returns. Without loss of generality, let us assume that $e_1^*$ lies on the primary path. Using the above lemma, $x$ cannot be an endpoint of $e_1^*$. This implies that $x$ lies on $pt \DIA e_1^*$ or the secondary path. However, we have a condition for the secondary path in our maximiser.
\begin{figure}[hpt!]

\centering
\begin{tikzpicture}[scale=0.6]
\coordinate (p) at (-4,8);
\coordinate (a) at (-7,4);
\coordinate (c) at (-3,4);
\coordinate (d) at (-2.5,3.3);
\coordinate (b) at (-7,3);
\coordinate (z) at (-5.8,7);
\coordinate (y) at (-5,7.5);
\coordinate (x) at (-5.5,5.5);

\draw[ line width=0.7mm,black,opacity=1] (p) ..node[above, rotate=70, scale=0.8] {\(pe_1\)} controls (-8,6) and (-7,6) .. (a);
\draw[ line width=0.7mm,black,opacity=1] (p)  ..node[pos=0.8,above, rotate=-30, scale=0.8] {\(pe_2 \DIA e_1\)} controls (-8,6) and (-6,6) .. (c);
\filldraw [black] (p) circle (3pt);
\filldraw [black] (z) circle (3.5pt);
\filldraw [red] (a) circle (3pt);
\draw[ line width=0.7mm,black,opacity=1] (-7.2,4.6) -- (-6.85,4.6);

\filldraw [red] (c) circle (3pt);
\filldraw [red] (d) circle (3pt);
\filldraw [red] (b) circle (3pt);

\draw[ line width=0.7mm,red,opacity=1] (c) -- (d);
\draw[ line width=0.7mm,red,opacity=1] (a) -- (b);
\filldraw [red] (c) circle (3pt);

\draw (c) node[right][scale=0.8] {$c$};
\draw (d) node[right][scale=0.8] {$d$};
\draw (p) node[above][scale=0.8] {$p$};
\draw (a) node[left][scale=0.8] {$a$};
\draw (b) node[left][scale=0.8] {$b$};
\draw (-5.8,7.1) node[above][scale=0.8] {$z$};
\draw (-7.2,4.6) node[left][scale=0.8] {$(1+\epsilon)^i$};

\filldraw [black] (x) circle (3pt);
\filldraw [black] (y) circle (3pt);

\draw (c) node[right][scale=0.8] {$c$};
\draw (-5.5,5.4) node[left][scale=0.8] {$x$};
\draw (-4.3,7.6) node[below][scale=0.8] {$y \in \LL_l$};

\filldraw [red] (c) circle (3pt);

\end{tikzpicture}
\caption{The vertex $y$ is $D$-close}
\label{fig:b46}
\end{figure}
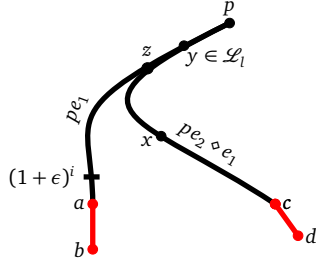
Let $\EL =  \lfloor\log |pe_2 \DIA e_1|\rfloor$.  One of the conditions of the maximiser is that $2^{\EL}$ distance from $s$ on the secondary path is intact from failures.   Thus, $|px \DIA e_1^*| \ge 2^{\EL}$. But $x$ lies on the first segment of $P[p,t]$. So $|px \DIA e_1^*| = |px| \ge 2^{\EL}$. Now, using $x$, we will find a vertex that is {\em close} to $e_2$.    Using \Cref{lem:highprobability}, with a high probability, we will find a vertex, say $y$, in path $px$ from $x$ that lies in $\LL_{\EL}$ (See \Cref{fig:b46}). Since $|pe_2\DIA e_1| = O(2^{\EL})$, even $|ye_2 \DIA e_1| =O(2^{\EL})$. Thus, $y$ is close to $e_2$. We will now show that it is $D$-close to $e_2$. To this end, we note that the detour of $P[p,t]$ starts on the secondary path after the vertex $x$ (as $x$ is intact from failures). Thus, the detour also starts after the vertex $y$ on the path $P[p,t]$. Thus, $y$ is a $D$-close vertex to $e_2$.

Remember that $p$ satisfied Lemma \ref{claim:trap}. We claim that once $a$ is outside the trapezoid of path $P[p,z]$, it is outside the trapezoid of any of its suffix paths. Specifically, for $P[y,t]$, $a$ lies outside its trapezoid. The reader can verify that Lemma \ref{claim:trap} holds even if we replace  $p$ with $y$. Thus, Lemma \ref{claim:trap} and \ref{claim:LCA} hold even for $p[y,t]$ path.

\section{From $D$-close vertex to  a clean vertex}
\label{sec:getclean}

Let us describe the setting first. In the previous section, we have found a $D$-close vertex from the source $s$. We will abuse notation and assume that the vertex $s$ itself is a $D$-close vertex and  $s \in \LL_{\EL}$. We now show how to find a clean vertex using $s$.   There are two cases, as in the previous section.

\subsection{The detour starts on the primary path}
\label{sec6:startprimary}
 In this case, we will use another maximiser similar to maximiser (\ref{eqn:3}). First, we set $dist[0] = |se_1|$. The conditions of the maximiser are as follows:

\begin{align*}
\DD_{1\alpha}(s,t,dist,clean) =  \arg\max_{F \in E^2}\{& \text{$dist[0]$ distance from $s$ on primary path is intact from $F$ ;} \\
& \text{Other conditions based on $\alpha$ from the $t$ side.}\}
 \numberthis \label{eqn:7}
\end{align*}

Let $x$ be an endpoint of an edge returned by the maximiser such that $x$ lies on the first segment of $P$. We will now show that $x$ is an $s$-clean vertex. We claim that \Cref{claim:xinst} also applies for maximiser (\ref{eqn:7}). Indeed, in this lemma, we are only using the fact that the detour starts on the primary path -- which is valid for this case too. Now, since $x$ lies on the first segment and $x$ cannot lie on the $se_1$ path (by the condition of the maximiser), $x$ cannot lie on the $st$ path. Hence, using \Cref{claim:xinst}, $x$ is an $s$-clean vertex.

\subsection{The detour starts on the secondary path}

Let us again reiterate the setting that leads us to a $D$-close vertex at the end of \Cref{sec:startsecondary}. Again, we will abuse notation and assume that $s$ is a $D$-close vertex to $e_2$. Also,  \Cref{claim:trap} holds for $s$. So, $a$ is outside the trapezoid of $P[s,t]$ path. Using this property, we can say that \Cref{claim:LCA} also holds for $s$. Let $i$ be the largest integer such that $(1+\EP)^i \le |se_1|$. Let $dist[0] = |se_2 \DIA e_1|$. The conditions of our maximiser are as follows (it is the same as maximiser (\ref{eqn:6})):

\begin{align*}
\DD_{1\alpha}(s,t,dist,clean) =  & \arg\max_{F \in E^2} \{   \text{ $(1+\EP)^i$ distance from $s$ on primary path and $dist[0]$ distance from $s$}\\ & \text{ on secondary path is intact from $F$ ; other conditions based on $\alpha$ from $t$ side.} 
 \numberthis \label{eqn:8}
\end{align*}

Let $x$ be an endpoint of an edge returned by the maximiser such that $x$ lies on the first segment of $P$.
We will now show that $x$  is an $s$-clean vertex.
 To prove our claim, we will show  that neither $V(e_1)$ nor $V(e_2)$ intersects with $T_s(x)$ as we have already assumed that $sx$ is intact from failures $\{e_1,e_2\}$.

 The proof that $V(e_1) \cap T_s(x) = \emptyset $ follows from arguments similar to \Cref{lem:notinpz}. Let us first show that $V(e_1) \cap  T_s(x) = \emptyset $. Let $z$ be the least common ancestor of $e_1$ and $e_2$ in the paths $se_1$ and $se_2 \DIA e_1$. Since $s$ satisfies \Cref{claim:LCA},  $|sz| \leq (1+ \epsilon)^i$. The first condition of the maximiser is that $(1+\EP)^i$ distance from $s$ on the primary path is intact from failures. Hence, $x$ cannot lie on the $sz$ path. It must lie on $ze_2 \DIA e_1$. Thus,    $T_s(a) \cap T_s(x) = \emptyset $.  So, $V(e_1) \cap T_s(x) = \emptyset $. Let $\{e_1^*,e_2^*\}$ be the edges returned by the maximiser. Without loss of generality, let us assume that $e_1^*$ lies on the primary path. Using similar arguments like \Cref{lem:notinpz}, $x$ is not an endpoint of $e_1^*$ as $x$ does not lie on the primary path. This implies that $x$ lies on $st \DIA e_1^*$ or the secondary path. However, we have a condition for the secondary path in our maximiser.

Our second condition is that $|se_2 \DIA e_1|$ distance from $s$ on the secondary path is intact from failures. Thus, $|sx| \ge  |se_2 \DIA e_1|$. Or, $|sx| \ge |se_2| = |sc|$. Thus, $T_s(x)$
cannot contain the vertex $c$.

We will now show that $d \notin T_s(x)$. Our argument here is the same as in the proof of \Cref{claim:xinst}. We repeat the arguments here for the sake of completeness. If $d \in T_s(x)$, then since $c \notin T_s(x)$, there exists some $c'$ such that $c'd$ is an edge of $T_s(x)$. Consider the path from $s$ to $x$ concatenated with the path from $x$ to $d$ in $T_s(x)$. This concatenated path represents the shortest path from $s$ to $d$, and it also avoids $e_1$. Additionally, it should be noted that $T_s$ was constructed by adding small weights to each edge. Therefore,   $|sd \DIA  e_1| = |sd|$ is unique and does not pass through $e_2$. Furthermore, it should be remembered that we found $st \DIA  e_1$ in the same weighted graph, and thus $st \DIA e_1$ is also unique and passes through $e_2$. However, according to the property of shortest paths, $sd \DIA e_1$ should be a subpath of $st \DIA e_1$. This leads to a contradiction, implying that $d \notin T_s(x)$. Thus, once we find a $D$-close vertex, the next vertex we can find on the first segment of $P$  has to be an $s$-clean vertex.

\section{From a clean vertex to an intermediate vertex}
\label{sec:getint}

We will assume that we have found an $s$-clean vertex, say  $p$, such that the path $P$ passes through $p$ using the process discussed in the previous section. In this section, we will show how to use a maximiser to find an intermediate vertex. First set $clean[0] = p$.  The conditions of the maximiser are as follows:

\begin{align*}
\DD_{2\alpha}(s,t,dist,clean) =  \arg\max_{F \in E^2}\{&   \text{$clean[0]$ is $s$-clean from $F$, other conditions based on $\alpha$ from $t$ side.}\}
 \numberthis \label{eqn:9}
\end{align*}

We claim that none of the endpoints of the edges returned by the maximiser is in the first segment of \(P\). For contradiction, let \(x\) be the vertex returned by the maximiser, which is in the first segment of \(P\). Then, the path \(sx\) remains intact from faults $\{e_1,e_2\}$. Since $p$ is $s$-clean, the path \(sp\) is intact from faults $\{e_1,e_2\}$. Also, both \(p\) and \(x\) lie on the path \(P\). Therefore, either \(sp\) contains \(x\) or \(sx\) contains \(p\). However, if \(x \in sp\), it contradicts our condition in the maximiser that \(sp\) is intact from faults. Thus, this possibility cannot arise. Similarly, if \(sx\) contains \(p\), then \(x \in T_s(p)\). However, this violates the condition in the maximiser that \(T_s(p)\) is intact from faults. As both possibilities lead to a contradiction, our assumption is false. Hence, \(x\) cannot lie in the first segment of \(P\). Now, if $x$ also avoids the last segment, \(x\) becomes an intermediate vertex, and we are done.

\section{Space required by our data-structures}
\label{sec:space}
Consider one of the maximisers of \cite{DuanR22}.
$$\DD_{01}(s,t, x,y) = \text{argmax}_{\{e_1,e_2\} \in E^2}\left\{ \text{$sx$ is intact from  faults $\{e_1,e_2$\} and $y$ is $t$-clean from $\{e_1,e_2\}$ } \right\}
$$

To calculate the size of the data structure, we can consider two terms: the first term represents combinations of the source vertex $s$ and the vertex $x$ (referred to as a helper vertex in \cite{DuanR22}). The number of combinations of sources and helper vertices can be $O(n^2)$. Similarly, the number of combinations of the destinations (such as $t$) and the helper vertices (such as $y$) can also be $O(n^2)$. Multiplying these two terms gives us a total space of $O(n^4)$. We now formally define this notion which will help us to calculate the space taken by our algorithm.
\begin{definition}(Space requirement from the source side)

Let us assume that the maximiser function has $k$ parameters completely dependent on the source. The space required from the source side is the number of possible combinations of these $k$ parameters and the source. Similarly, we can define the space requirement from the destination side.\end{definition}

We will primarily determine the required space from the source side. In most cases, this space will be $\TL(n)$. The calculation is symmetrical from the destination side, resulting in a total space bound of $\TL(n^2)$. However, this approach is only possible for a few maximisers. Consequently, we divide our maximisers into two groups. The space taken by the first group can be calculated using the approach mentioned above, while calculating the space requirement of the second group requires more complex techniques. Here is the partition:

\begin{enumerate}
 \item $\DD_{\alpha \beta}$ where both $\alpha, \beta  \in [0,1]$ and 
 \item  $\DD_{\alpha\beta}$ where $\alpha $ and/or $\beta$ is equal to 2.

\end{enumerate}
For the maximisers in group (1), the values of $clean[0]$ and $clean[1]$ are not set. These maximisers are used when neither an $s$-clean nor a $t$-clean vertex has been found. On the other hand, for the maximisers in the group (2), the value of $clean[0]$ and/or $clean[1]$ is set. Let us first bound the size of the group (1) maximisers.

\subsection{Group 1 maximisers}
There are two kinds of maximisers in this group: the first one in which we have not found a $D$-close vertex from $s$ and the second in which we have found the $D$-close vertex. We will bound the size of these maximisers separately.

\begin{enumerate}

\item $\DD_{0\alpha}$ where $\alpha \in \{0,1\}$

There are four maximisers of type $\DD_{0\alpha}$, namely maximiser (\ref{eqn:3}), (\ref{eqn:4}), (\ref{eqn:5}), and (\ref{eqn:6}). Among these maximisers, three of them have at most $\log n$ values set for $dist[0]$, where $dist[0]$  is:

\begin{itemize}
\item $2^{\lfloor \log |se_1| \rfloor}$ in maximiser (\ref{eqn:3})
\item $2^{\lfloor \log (\min{{|se_1|,|se_2 \DIA e_1|}})} \rfloor$ in maximiser (\ref{eqn:4})
\item $2^{\lfloor \log |pe_2|\rfloor}$ in maximiser (\ref{eqn:6})
\end{itemize}

For maximisers (\ref{eqn:3}) and (\ref{eqn:4}), $dist[0]$ has $\log n$ possibilities and the number of sources can be at most $n$. Thus, the space these maximisers take from the source side is $\TL(n)$. In the case of maximiser (\ref{eqn:6}), there is an additional condition that the $(1+\EP)^i$ distance on the primary path remains intact from failures. Again, $i$ can take $\log n$ values (assuming $\EP$ is a constant). Hence, the space taken by maximiser (\ref{eqn:6}) from the source side is also $\TL(n)$.

Now, let us consider maximiser (\ref{eqn:5}). In this maximiser, $y$ is a {\em close} vertex to $e_1$. Let us assume that  $y \in \LL_{\EL}$. We set $dist[0] = |ye_1|$. Since $y$ is close to $e_1$, the number of possible values for $dist[0]$ is $O(2^{\EL})$. Thus, the total space taken by this maximiser from the source side is $\sum_{\EL=0}^{\log n} \TL\left(\frac{n}{2^{\EL}} \times 2^{\EL}\right) = \TL(n)$.

\item $\DD_{1\alpha}$ where $\alpha \in \{0,1\}$.

There are four maximisers of type $\DD_{1\alpha}$, namely maximiser (\ref{eqn:7}) and (\ref{eqn:8}). We use these maximisers after we find a $D$-close vertex from the source side. For simplicity, let us assume that the source vertex $s$ itself is $D$-close. We further assume that $s \in \LL_{\EL}$. Since $s$ is $D$-close to the fault, we can set $O(2^{\EL})$ different values for $dist[0]$ in maximiser (\ref{eqn:7}) and (\ref{eqn:8}).  Additionally, similar to maximiser (\ref{eqn:5}), we can set at most $\log n$ values for $i$ in maximiser (\ref{eqn:8}). Thus, the space these two maximisers take from the source side is $\TL(n)$.

\end{enumerate}

Thus, the number of maximisers of type $\DD_{00},\DD_{01},\DD_{10}$ and $\DD_{11}$ is $\TL(n^2)$.  

\subsection{Group 2 maximisers}

In a Group 2 maximiser, we set the values of $clean[0]$ and/or $clean[1]$. One might assume that there could be $O(n)$ possibilities for $clean[0]$ and $clean[1]$, implying that the size of Group 2 maximisers should be strictly greater than $\TL(n^2)$. However, this is not the case. Firstly, we claim that the number of possible vertices in $clean[0]$ is limited. For instance, let us consider the maximiser $\DD_{20}(s,t,dist,clean)$. In this maximiser, we set the value of an $s$-clean vertex as $clean[0]$. However, where does this $s$-clean vertex come from? A keen reader can observe that we must have found this $s$-clean vertex in a previous call to either the maximiser $\DD_{00}(s,t,\cdot,\cdot)$ or $\DD_{10}(s,t,\cdot,\cdot)$. 

We have already bounded the size of $\DD_{00}$ and $\DD_{10}$ maximisers, and each maximiser returns at most four vertices. Thus, we conclude that there are at most four possible values for $clean[0]$ in $\DD_{20}(s,t,dist,clean)$ for each maximiser of form $\DD_{00}(s,t,\cdot,\cdot)$ or $\DD_{10}(s,t,\cdot,\cdot)$. Thus, the number of possible vertices in $clean[0]$ is $\TL(n^2)$. 

Using the same strategy, we can bound the space taken by Group 2 maximisers. There is only one Group 2 maximiser, that is,  maximiser (\ref{eqn:9}). This maximiser can be of three types: $\DD_{20}, \DD_{21}$ and $\DD_{22}$. We will individually bound the space for each type.

\begin{enumerate}
\item $\DD_{20}$

In this case, we have already found an $s$-clean vertex from the source side. As mentioned earlier, this $s$-clean vertex must be the output of a maximiser, specifically of type $\DD_{00}$ or $\DD_{10}$. Let's assume that we have found an $s$-clean vertex $x$ from the maximiser $\DD_{00}(s,t,dist',clean')$. Two changes are made in the subsequent call to the maximiser $\DD_{20}(s,t,dist,clean)$. First, we set $bits[0]=2$, which renders $dist[0]$ irrelevant (as in maximiser  (\ref{eqn:9}), $dist[0]$ is not used in any condition). The main change is setting $clean[0]$ to $x$. Since each maximiser outputs two edges or four vertices, for each maximiser of type $\DD_{00}$ or $\DD_{10}$, at most four maximisers of type $\DD_{20}$ are created. Hence, the total size of $\DD_{20}$ is $\TL(n^2)$. Similarly, the size of $\DD_{02}$ is also $\TL(n^2)$.

\item $\DD_{21}$

Once again, let us consider the scenario that occurred in our algorithm just before utilising this data structure. At that point, we either discovered an $s$-clean vertex or identified a $D$-close vertex from $t$. Let us focus on the more straightforward case we have already addressed earlier.

\begin{enumerate}
\item We have found an $s$-clean vertex.

The $s$-clean vertex obtained must have been the output of either the maximiser $\DD_{01}$ or $\DD_{11}$. Similar to the previous case, we assert that the overall size of such a data structure is $\TL(n^2)$.

\item We have found a $D$-close vertex from $t$.

This $D$-close vertex must be the output of the maximiser $\DD_{20}$. Let $t$ be a $D$-close vertex found using the maximiser  $\DD_{20}(s,t',dist',clean')$. After finding $t$, our algorithm  uses the maximiser $\DD_{21}(s,t,dist,clean)$. The pair $s, clean[0]$ remains unchanged between $\DD_{20}$ and $\DD_{21}$. The values of $dist[0]$ and $dist'[0]$ are not used in the conditions of maximisers $\DD_{21}$ and $\DD_{20}$, respectively, so they remain unchanged.

The only change is that $t'$ in $\DD_{20}$ is replaced by the $D$-close vertex $t$, and $dist[1]$ is set to a suitable value. Let $t \in \LL_{\EL}$. The number of possible values for $dist[1]$ is $O(2^{\EL})$. Thus, the total space taken from the destination side in the maximiser $\DD_{21}$ is $\sum_{\EL = 0}^{\log n} \frac{n}{2^{\EL}} 2^{\EL} = \TL(n)$. 

We will now examine the total space from the source side and show it as $\TL(n)$. As stated above,  there is no change in $s$ and $clean[0]$ between the maximisers $\DD_{20}$ and $\DD_{21}$. Thus, the space taken by the maximiser $\DD_{21}$ from the source side is the same as the space taken by the maximiser $\DD_{20}$ from the source side. We now bound the space taken by $\DD_{20}$ from the source side. 

We use the maximiser $\DD_{20}$ after we obtain an $s$-clean vertex either from $\DD_{00}$ or $\DD_{10}$. We look at these two transitions to bound the space of $\DD_{20}$ from the source side.
Let us go over these two transitions separately:

\begin{itemize}
\item  The $s$-clean vertex in the maximiser $\DD_{20}$ is obtained from  the maximiser $\DD_{00}$.

Consider the maximiser $\DD_{00}(s, t, dist, clean)$. We assume that we have found an $s$-clean vertex from this maximiser, which is then used in the subsequent call to maximiser $\DD_{20}$. By applying \Cref{rem:notusecondmaximiser},  this $s$-clean vertex cannot be the output of maximiser (\ref{eqn:5}).

Therefore, the maximiser $\DD_{00}$ can fall into types (\ref{eqn:3}), (\ref{eqn:4}), and (\ref{eqn:6}). In these maximisers, there can be at most $\log n$ different values for $dist[0]$ and $dist[1]$. Consequently, at most $O(\log^2 n)$ clean vertices can serve as the output of the maximiser $\DD_{00}$. As a result, the total number of possible combinations of source and $clean[0]$ in the maximiser $\DD_{20}$ is bounded by $O(n \log^2 n)$. Hence, the space required from the source side in the maximiser $\DD_{20}$ is $\TL(n)$.
\item The $s$-clean vertex in the maximiser  $\DD_{20}$ is obtained form the maximiser  $\DD_{10}$.

Consider the maximiser $\DD_{10}(s, t, dist, clean)$. We assume that we have found an $s$-clean vertex from this maximiser, which is then used in the subsequent call to the maximiser $\DD_{20}$. Again, by using \Cref{rem:notusecondmaximiser}, this $s$-clean vertex cannot be the output of the maximiser \ref{eqn:5}.

If the maximiser $\DD_{10}$ falls into types (\ref{eqn:3}), (\ref{eqn:4}), or (\ref{eqn:6}), then there can be at most $\log n$ different values of $dist[1]$. By the definition of $\DD_{10}$, $s$ is a $D$-close vertex from the source. Let us assume that $s \in \LL_{\EL}$. Consequently, the number of possible different values of $dist[0]$ is at most $O(2^{\EL})$.

As a result, at most $O(2^{\EL}\log n)$ clean vertices can serve as the output of the maximiser $\DD_{10}$. Thus, the total number of possible combinations of source and $clean[0]$ in $\DD_{20}$ is $\sum_{\EL=0}^{\log n} O\left(\frac{n}{2^{\EL}} 2^{\EL}\log n\right)$. Therefore, the space required from the source side in $\DD_{20}$ is $\TL(n)$.
\end{itemize}
Thus, the total space taken by $\DD_{21}$ is $\TL(n^2)$. The same argument also applies to $\DD_{12}$.

\end{enumerate}
\item $\DD_{22}$

This data structure is invoked when we obtain either an $s$-clean or a $t$-clean vertex from $\DD_{20}$, $\DD_{21}$, $\DD_{02}$, or $\DD_{21}$. All of these data structures have a size of $\TL(n^2)$. As each maximiser outputs two edges or four vertices, the size of $\DD_{22}$ is also $\TL(n^2)$.
\end{enumerate}

\section{Time taken by the $\QU$ algorithm}
\label{sec:time}

First, we note that it takes $O(1)$ time to determine all the parameters of the maximiser. From the source side, to set $dist[0]$, we need to find  $|se_1|$ and $|se_2 \DIA e_1|$. Since we do not know whether the detour starts or ends on the primary or secondary path, we exhaustively check all possibilities. There are only four possible cases:

\begin{enumerate}
\item The detour starts and ends on the primary path.
\item The detour starts on the secondary path but ends on the primary path.
\item The detour starts on the primary path but ends on the secondary path.
\item The detour starts and ends on the secondary path.
\end{enumerate}

Let us first consider the most straightforward case: when the detour starts and ends on the primary path. Discussing the running time in the context of the flowchart presented in \Cref{fig:ourapproach} is easier. After the invocation of the first maximiser, namely maximiser (\ref{eqn:3}), in the worst case, we can obtain a $D$-close vertex from $s$. The second invocation may give a $D$-close vertex from $t$, and so on. We claim that we must have found the intermediate vertex after four invocations of maximisers. Since each maximiser returns two edges or four vertices, there are at most 16 vertices to be processed. This can be done in $O(1)$ time. Thus, we obtain the intermediate vertex in $O(1)$ time.

The analysis is similar for the second case: when the detour starts on the secondary path but ends on the primary path. In this case, finding a $D$-close vertex from the source vertex itself may require three invocations of maximisers (see \Cref{sec:startsecondary}). Thus, the total number of calls to the maximiser increases to six. Even in this case, the running time is $O(1)$.

The last two cases are similar to the previous two cases. Since we do not know which case we are in, we run our algorithm for all four cases. Each case takes $O(1)$ time, after which we obtain $O(1)$ candidates for the intermediate vertices. Simultaneously, we always maintain an upper bound on the length of $P = |st \DIA \{e_1,e_2\}|$. Using \Cref{lem:notinpathP}, if the edges returned by any of our maximisers do not intersect with $P$, we can update our upper bound on the size of $P$ (represented by the value $L$ returned by the $\HIT$ algorithm). Therefore, our $\HIT$ algorithm returns an upper bound on $P$ and a set $H$ of $O(1)$ candidates for the intermediate vertices. Following the analysis in \Cref{sec:overview}, the running time of our algorithm is $O(1)$.   

\section{Both the faults are on the primary path}
\label{both}Let us first define the setting in this easy case when both the faulty edges lie on the primary path. We want to find the length of  $P=st \DIA \{e_1,e_2\}$ where  $e_1 = (a,b)$ and $e_2=(c,d)$ lie on the primary path. We further assume that $a$ is closer to $s$ and $d$ is closer to $t$. 

In \Cref{sec:twofault}, we designed an algorithm for the case when the faults were on the primary as well as the secondary path. One of the corner cases in that section was as follows: We find a $D$-close vertex, say $y$, such that the primary path $yt$ contains both faults. Thus, we jump from one case to the other. We will now show that we cannot jump in the reverse direction. To this end, we state the following lemma without proof. The proof of this lemma is trivial.

\begin{lemma}
\label{lem:remain} 
If the path $st$ has both the faults, then for any $y \in se_1$,  both the faults are in the primary path $yt$. Similarly, for $y \in te_2$ path, both the faults are in the primary path $sy$. 
\end{lemma}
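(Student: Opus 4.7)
The plan is to reduce the lemma to two well-known facts about shortest paths: subpath optimality and uniqueness of shortest paths in the perturbed graph $G'$ introduced in \Cref{sec:prelim}.

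First I would unpack the hypothesis. The notation $y \in se_1$ places $y$ on the primary $st$-path at some position between $s$ and the endpoint $a$ of $e_1$. Since $a$ is closer to $s$ than $d$ on $st$, the order of encountered faulty edges along $st$ starting from $y$ is first $e_1$ and then $e_2$. In particular, the subpath of $st$ from $y$ to $t$ (obtained by deleting the $sy$-prefix from $st$) is a valid $y$-to-$t$ walk in $G$ that passes through both $e_1$ and $e_2$.

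Next I would invoke subpath optimality: any contiguous subpath of a shortest path is itself a shortest path between its endpoints. Hence the subpath $st[y,t]$ is a shortest $y$-to-$t$ path. Because the preprocessing in \Cref{sec:prelim} added small weights to each edge to ensure that every pair of vertices has a unique shortest path in $G'$, this shortest $y$-to-$t$ path is the unique one. By our definition of primary path (\Cref{def:primsecond}), this unique shortest $y$-to-$t$ path is exactly the primary path of the pair $(y,t)$, and it contains both $e_1$ and $e_2$ as required. The symmetric statement for $y \in te_2$ follows by reversing the roles of $s$ and $t$.

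The main obstacle: there is essentially none. The only subtlety worth highlighting is that we must quote uniqueness from the perturbed graph $G'$ — without it, one could imagine an alternative shortest $y$-to-$t$ path that bypasses one of the faults, even though $st$ uses both — but uniqueness rules this out. This is why the authors remark that the proof is trivial, and I would state the lemma without further calculation.
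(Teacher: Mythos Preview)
Your argument is correct and is exactly the ``trivial'' proof the paper has in mind; the paper in fact states \Cref{lem:remain} without proof, and your use of subpath optimality together with uniqueness of shortest paths in the perturbed graph is the natural justification. One cosmetic remark: the weight perturbation guaranteeing uniqueness is introduced in \Cref{sec:overview} (and recalled in \Cref{sec:twofault}), not in \Cref{sec:prelim}, so you may wish to adjust the cross-reference.
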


In the case we are dealing with in this section, the reader will see that if we find a $D$-close vertex, say $p$, from $s$, then it will lie on $se_1$ path. Thus, using the above lemma, we will remain in the case where both the faults lie on the primary path $pt$. Thus, we will not jump from this case to any other case. Let us now discuss how we handle this case.  

The reader will see that this case is nearly similar to the case described in \Cref{sec:startprimary}.
Indeed, it is necessarily the case here that the detour starts and ends on the primary path. Our first maximiser is the same as the maximiser (\ref{eqn:3}). We set $dist[0] = 2^{\log \lfloor |se_1| \rfloor}$. The conditions of the maximiser are as follows:

 \begin{align*}
\DD_{0\alpha}(s,t,dist,clean) =  \arg\max_{F \in E^2}\Biggl\{& dist[0]  \text{ distance from $s$ on the primary path is intact from $F$ and}\\ 
& \text{Other conditions based on $\alpha$ from the $t$ side.}\Biggl\}
 \numberthis \label{eqn:10}
 \end{align*}

The only difference between maximiser (\ref{eqn:10}) and maximiser (\ref{eqn:3}) is the common condition of (\ref{eq:common}). For our case, the common condition is as follows:
\begin{equation}
\text{Common Condition: A pair of edges $(e,e')$ will be considered in the maximiser only if $e,e' \in st$}
\end{equation}

  Let $x$ be an endpoint of an edge returned by the maximiser such that it lies on the first segment of $P$. Now,  like \Cref{claim:xinst}, we claim that either $x$ lies on the $st$ path or is $s$-clean. The proof of this is easier than the proof of \Cref{claim:xinst} but follows similar arguments.

  \begin{lemma}[Same as \Cref{claim:xinst}]
 \label{claim:xsclean}
        Let $x$ be an endpoint of an edge returned by the maximiser $\DD_{0 \alpha}$ such that $sx$ lies on the first segment of $P$. If $x$ does not lie on the $st$ path, then $x$ is an $s$-clean vertex.
  \end{lemma}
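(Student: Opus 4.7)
The plan is to closely mirror the proof of \Cref{claim:xinst}, while exploiting the fact that because both faults lie on the primary path, all four fault-endpoints $a,b,c,d$ are descendants of $a$ in $T_s$. This collapses the case analysis needed in the original proof.

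First I would establish that $sx$ is intact from $\{e_1,e_2\}$. This is immediate from the hypothesis that $x$ lies on the first segment of $P$, together with the standard $\LCA$-based check in $T_s$ described at the beginning of \Cref{sec:twofault}: the first segment of $P$ coincides with a shortest $s$-to-$x$ path avoiding the faults, so no endpoint of $e_1$ or $e_2$ lies strictly between $s$ and $x$ on $sx$.

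Next I would show that $T_s(x)$ is intact from $\{e_1,e_2\}$, i.e.\ contains none of $a,b,c,d$. Since $x\notin st$ and $sa$ is a prefix of $st$, we have $x\notin sa$, so the subtrees $T_s(x)$ and $T_s(a)$ are disjoint in $T_s$. Because $T_s$ is built on the small-weighted graph $G'$ (so shortest paths are unique) and the primary $st$-path visits $a,b,c,d$ in order, the vertices $b,c,d$ are all descendants of $a$ and hence lie in $T_s(a)$. Combining these two facts yields $\{a,b,c,d\}\cap T_s(x)=\emptyset$, so $T_s(x)$ is intact. Together with the intactness of $sx$, this gives that $x$ is $s$-clean.

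There is essentially no obstacle here; the only point to verify is that all four fault-endpoints sit inside the single subtree $T_s(a)$, which follows immediately from our standing conventions that $a$ is the endpoint of $e_1$ closer to $s$ and $d$ is the endpoint of $e_2$ closer to $t$. In particular, no analogue of the $c$-case (using that the detour would otherwise start on the secondary path) or of the $d$-case (using uniqueness of shortest paths in $G'$) from the proof of \Cref{claim:xinst} is needed.
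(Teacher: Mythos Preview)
Your proposal is correct and follows essentially the same route as the paper: from $x\notin st$ deduce that $T_s(x)$ and $T_s(a)$ are disjoint, then observe that since both faults lie on the primary path all four endpoints $a,b,c,d$ sit in $T_s(a)$, so $T_s(x)$ is intact. Your remark that the $c$- and $d$-cases of \Cref{claim:xinst} collapse here is exactly the simplification the paper exploits.
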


   \begin{proof}
        Let us assume that $x \notin st$. By the condition in the lemma, $sx$ is intact from failures. Thus, to show that $x$ is an $s$-clean vertex, we only need to show that the vertices in $V(e_1)$ and $V(e_2)$ do not lie in $T_s(x)$.

        Since $x$ is not in the $st$ path and hence not in the $sa$ path,  the two subtrees $T_s(x)$ and $T_s(a)$ are disjoint. Here, $e_1, e_2 \in T_s(a)$ (as both the edges are on the primary path itself). So, $(V(e_1) \cup V(e_2)) \cap
T_s(x) = \emptyset$.
   \end{proof}

If $x$  lies on the $st$ path then, similar to \Cref{sec:startprimary}, we find a $D$-close vertex to $e_1$ using $x$, say $y$. Here, note that, by construction, $y$ lies on $se_1$ path. Using \Cref{lem:remain}, the primary path $yt$ also contains faults $\{e_1,e_2\}$.  Then, using a maximiser similar to the one used in  \Cref{sec6:startprimary}, we can find an $s$-clean vertex. Similarly, we get an intermediate vertex using a maximiser similar to the one used in  \Cref{sec:getint}. Since our arguments are similar to the arguments in \Cref{sec6:startprimary} and \Cref{sec:getint}, we do not repeat them here. The space and the time taken in this case also can be bounded using the arguments presented in \Cref{sec:space} and \Cref{sec:time}, respectively. Thus, we have proven the main result of the paper, that is \Cref{thm:main}.

\bibliographystyle{plainurl}
\bibliography{paper.bib}
\appendix

\end{document}